\documentclass[11pt]{article}


\usepackage[margin=1in]{geometry}
\usepackage{mathpazo}
\usepackage{amssymb}
\usepackage{amsmath}
\usepackage{mathtools}
\usepackage{fancyhdr}
\usepackage{tikz}
\usepackage{amsthm}
\usepackage{hyperref}
\theoremstyle{plain}
\usepackage{comment}
\usepackage{url}

\newtheorem{theorem}{Theorem}
\newtheorem{lemma}[theorem]{Lemma}
\newtheorem{problem}{Open Problem}
\newtheorem{corollary}[theorem]{Corollary}
\newtheorem{proposition}[theorem]{Proposition}
\newtheorem*{theorem*}{Theorem}

\theoremstyle{definition}

\newtheorem{example}{Example}

\theoremstyle{remark}

\newcommand{\real}{\mathbb{R}}

\DeclareMathOperator{\OR}{OR}

\DeclareMathOperator{\adv}{Adv}
\DeclareMathOperator{\opt}{opt}

\DeclarePairedDelimiterX{\inp}[2]{\langle}{\rangle}{#1, #2}

\begin{document}
\author{Hamoon Mousavi\\ \href{mailto:sh2mousa@uwaterloo.ca}{sh2mousa@uwaterloo.ca} }
\title{Lower Bounds on Regular Expression Size}
\maketitle
\begin{abstract}
   We introduce linear programs encoding regular expressions of finite languages. We show that, given a language, the optimum value of the associated linear program is a lower bound on the size of any regular expression of the language. Moreover we show that any regular expression can be turned into a dual feasible solution with an objective value that is equal to the size of the regular expression. For binomial languages we can relax the associated linear program using duality theorem. We use this relaxation to prove lower bounds on the size of regular expressions of binomial and threshold languages.
\end{abstract}

\section{Introduction}\label{section:introduction}

Ellul et al. \cite{ellul-regular-expressions} introduced \emph{binomial languages} $B(n,k)=\{s\in \{0,1\}^n:|s|_1=k\}$ and \emph{threshold languages} $T(n,k)=\{s\in\{0,1\}^n:|s|_1\geq k\}$.\footnote{In that paper $T(n,1)=\{s\in\{0,1\}^n:|s|_1\geq 1\}=\{0,1\}^n\setminus\{0^n\}$ is called the omit language.} They also obtained regular expressions of length $O(n\log^k(n))$ for both these languages. The goal of this note is to show that these regular expressions are asymptotically optimal. 

Given a finite language $L$, in Section \ref{section:formulations} we give a maximization linear programming problem $\mathcal{P}(L)$ and show that the optimum value, denoted by $\opt(\mathcal{P}(L))$, is a lower bound on the length of regular expressions of $L$. This is the formulation we use in the next few sections to prove lower bounds on the length of regular expressions of binomial languages. However as we see, also in Section \ref{section:formulations}, this formulation fails to prove any nontrivial lower bounds on threshold languages. So we give a stronger formulation $\mathcal{P}_S(L)$ that can be used to prove lower bounds in the case of threshold languages. In fact we show that $\opt(\mathcal{P}_S(L))$ is the same as the length of the optimal regular expressions of $L$ for any language $L$. We show that $\mathcal{P}_S(T(n,k))$ and $\mathcal{P}(B(n,k))$ are connected and demonstrate that lower bounds on binomial languages can be turned into lower bounds on threshold languages.

In Section \ref{section:duals} we give dual linear programs of $\mathcal{P}(L)$ and $\mathcal{P}_S(L)$ denoted by $\mathcal{D}(L)$ and $\mathcal{D}_S(L)$ respectively. We show that any regular expression $R$ of $L$ can be turned into a feasible solution of $\mathcal{D}(L)$ (and also $\mathcal{D}_S(L)$). Moreover the objective value of the dual feasible solution obtained in this way is $|R|$. 

From Section \ref{section:formulations}, a strategy for proving lower bounds on $L$ is to guess a feasible solution of $\mathcal{P}(L)$ and calculate its objective value (which is automatically a lower bound on the length of regular expressions of $L$). However this is not an easy task as the number of constraints that must be verified (to establish feasibility) can be very large. For some languages $L\subseteq \Sigma^n$ the number of constraints in $\mathcal{P}(L)$ is exponential in $n$. In Section \ref{section:duality} we obtain a relaxation $\mathcal{P}'(n,k)$ of $\mathcal{P}(B(n,k))$ with only polynomially many constraints and for which $\opt(\mathcal{P}'(n,k)) \geq \opt(\mathcal{P}(B(n,k)))$. We use the duality theorem of linear programming and prove that the optimum value $\opt(\mathcal{P}'(n,k))$ is also a lower bound on the length of regular expressions of $B(n,k)$. The problem $\mathcal{P}'(n,k)$ is efficiently solvable, and on all the examples that we tried the quantity $\opt(\mathcal{P}'(n,k))$ equals the length of regular expression of $B(n,k)$ given by Ellul et al. \cite{ellul-regular-expressions}. Therefore we conjecture that these regular expressions are optimal (and not just optimal up to a constant multiplicative factor). 

The Python program solving $\mathcal{P}'(n,k)$ can be downloaded from GitHub at \\\url{https://github.com/hamousavi/Linear-program-of-binomial-languages.git}. This program uses the CVXPY \cite{cvxpy} library for solving convex optimization problems.

In Section \ref{section:feasible} we construct a feasible solution of $\mathcal{P}'(n,k)$ with an objective value of $\Omega(n\log^k(n))$ for $k \leq 3$. Consequently Ellul et al. regular expressions are asymptotically optimal for $k\leq 3$. We conjecture that the same construction gives feasible solutions of $\mathcal{P}'(n,k)$ with objective values of $\Omega(n\log^k(n))$ for all values of $k$.

In Section \ref{section:complexity} we discuss the computational complexity of solving $\mathcal{P}(L)$ and $\mathcal{P}_S(L)$. In Section \ref{section:future} we discuss some open problems.

Chistikov et al. \cite{chistikov_et_al:LIPIcs:2017:7010} (see Appendix D therein) used linear programming to find lower bounds on the length of regular expressions of the language $L_n = \cup_{0\leq i<j<n} a_i a_j$ over the alphabet $\Sigma_n = \{a_0,a_1,\ldots,a_{n-1}\}$. They introduced an integer programming formulation for the length of the regular expressions of $L_n$. They then showed that the linear programming relaxation of this integer programming can be used to give a precise lower bound of $n(\lfloor \log(n)\rfloor+2)-2^{\lfloor \log(n)\rfloor +1}$ on the length of regular expressions of $L_n$.

Parts of this note, in particular the derivations in Sections \ref{section:formulations} and \ref{section:duals}, are inspired by the work of Reichardt \cite{Reichardt:2009} on quantum query complexity. Given a Boolean function $f$ there exists a maximization semidefinite program $\adv(f)$ such that its optimum value is a lower bound on the quantum query complexity of $f$. This is known as the adversary lower bound. The dual of $\adv(f)$ is related to the powerful framework of span programs. Quite interestingly any dual feasible solution can be turned into a quantum algorithm for computing $f$. Moreover, for the algorithm obtained in this way the query complexity is the same as the the objective value of the dual feasible solution. Reichardt gave as example the primal and dual semidefinite programs associated with the threshold function $f_{n,k}$. These are the Boolean functions defined by the relation $f_{n,k}^{-1}(1) = T_{n,k}$. It was known from Beals et al. \cite{Beals:2001} that the quantum query complexity of $f_{n,k}$ is $\Theta(\sqrt{k(n-k+1)})$, and adversary framework provided an alternative proof of this fact.

\section{Preliminaries}\label{section:preliminaries}

An \emph{alphabet} is a finite and nonempty set. The elements of an alphabet are called \emph{alphabet symbols}. In this note, a \emph{string} is a nonempty and finite sequence of alphabet symbols, and a \emph{language} is a nonempty and finite set of strings. Many of our examples in this note are over  the binary alphabet $\{0,1\}$. We use notation $\Sigma^+$ to denote the set of all strings over an alphabet $\Sigma$. Likewise $\Sigma^n$ denotes the set of all strings of length $n$ and $\Sigma^{\leq n}=\Sigma^1\cup\cdots\cup\Sigma^n$ denotes the set of all strings of length at most $n$. For a string $s$, its length is denoted by $|s|$ and the frequency of an alphabet symbol $a$ is denoted by $|s|_a$. For an alphabet symbol $a$ and a positive integer $n$ we use the shorthand notation $a^n$ to refer to the string $\underbrace{a\cdots a}_{n \text{ times}}$.

\emph{Regular expressions} are defined inductively. We say $R$ is a \emph{regular expression} over an alphabet $\Sigma$ if either $R\in\Sigma,R=(R_1+R_2)$ or $R=R_1R_2$ where $R_1$ and $R_2$ are regular expressions over $\Sigma$. This definition deviates from the conventional definition as it disallows Kleene closure $\ast$, the empty string $\epsilon$, and the empty set $\emptyset$. Moreover the parentheses are more constrained in our definition. Languages and regular expressions are associated as usual, and for a regular expression $R$ the language associated with it is denoted by $L(R)$. Concatenation takes precedence over union when interpreting regular expressions, and with this in mind, some parentheses can be dropped, for example $((00+01)+10)1$ can be written as $(00+01+10)1$. The length (also called size) of a regular expression $R$ denoted by $|R|$ is the number of alphabet symbols appearing in $R$. We use the shorthand notation $R^n$ to refer to the regular expression $\underbrace{R\cdots R}_{n \text{ times}}$. It should be understood that $|R^n|=n|R|$. An \emph{optimal regular expression} for a language is one with minimal length which need not be unique. 

For convenience we sometimes use a language in places where a regular expression is expected. For example for languages $L_1,L_2$ we may write $L_1+L_2$ to mean $L_1\cup L_2$. Similarly we may write $sL$ for a string $s$ and a language $L$ to mean $\{st:t\in L\}$. 

The \emph{subexpressions} of a regular expression $R$ are regular expressions themeselves and are defined recursively as follows. The expression $R$ is a subexpression. If $R=(R_1+R_2)$ or $R=R_1R_2$ then $R_1$ and $R_2$ are subexpressions. Every subexpression of these are also subexpressions of $R$. For example all subexpressions of $(00+11)1$ are $0,1,00,11,(00+11),(00+11)1$.  
A \emph{term} in a regular expression is a maximal subexpression consisting only of alphabet symbols (and the implicit concatenations in between). Maximal here means that the subexpression does not belong to a longer subexpression consisting only of alphabet symbols. A term uniquely defines a string. In our example the only terms are $00,11,1$.

The \emph{regular expression closure}, or \emph{closure} for short, of a language $L$, denoted by $\mathcal{C}(L)$, is a set of languages and is defined inductively as follows. The set $\mathcal{C}(L)$ contains $L$. In addition, for every choice of languages $L_1$ and $L_2$ for which $L=L_1L_2$ or $L=L_1+L_2$, the closure $\mathcal{C}(L)$ contains $L_1$, $L_2$, and recursively the sets $\mathcal{C}(L_1)$ and $\mathcal{C}(L_2)$. For example the closure of the language $L((00+11)1)=\{001,111\}$ is the set of languages $\{\{0\},\{1\},\{00\},\{11\},\{01\},\{001\},\{111\},\{00,11\},\{001,111\}\}$. To save space we use regular expressions in place of languages when we write closures, i.e., for our example we write the closure instead as $\{0,1,00,11,01,001,111,00+11,001+111\}$. For languages $K$ and $L$ for which $K\subseteq L$, it evidently holds that $\mathcal{C}(K)\subseteq \mathcal{C}(L)$.

\begin{example}\label{ex:closure all 1}
    Let $n$ be a positive integer. It is easy to verify that $$\mathcal{C}(\Sigma^n)=\{\emptyset\subset L\subseteq\Sigma^{m}:0< m\leq n\}.$$ This is the set of all languages for which all strings are of the same length and that length is at most $n$. 
\end{example}

\begin{example}[binomial languages]\label{ex:closure binomial 1} Let $\Sigma=\{0,1\}$ be the binary alphabet, let $n>0$, let $0\leq k\leq n$, and define a language $B(n,k) = \{s\in\Sigma^n:|s|_1 = k\}$. These are called \emph{binomial languages}. When we write $B(n,k)$ it is implicitly assumed that $n,k$ satisfy the constraints $n>0$ and $0\leq k\leq n$. From our earlier remark it holds that $\mathcal{C}(B(n,k))\subseteq \mathcal{C}(\Sigma^n)$. We can however specify the closure of binomial languages more precisely. In fact it is easy to verify that
	\begin{align*}
	\mathcal{C}(B(n,k)) = \{\emptyset\subset L\subseteq B(m,l): 0< m\leq n, 0\leq l\leq \min(m,k)\}.
	\end{align*}
When working with binomial language the set of pairs $\{(m,l): 0< m\leq n, 0\leq l\leq \min(m,k)\}$ appears a lot, so we introduce the shorthand notation $$\mathcal{C}(n,k)=\{(m,l): 0< m\leq n, 0\leq l\leq \min(m,k)\}.$$
Again when we write $\mathcal{C}(n,k)$ it is implicitly assumed that $n>0$ and $0\leq k\leq n$. 
\end{example}


We often refer to pairs of languages $K_1,K_2\in \mathcal{C}(L)$ for which $K_1K_2\in \mathcal{C}(L)$ as well. So we introduce the shorthand notation $$\mathcal{C}_c(L)=\{(K_1,K_2):K_1,K_2,K_1K_2\in\mathcal{C}(L)\},$$
where the letter $c$ in the subscript refers to concatenation. Likewise we introduce the shorthand notation
$$\mathcal{C}_u(L)=\{(K_1,K_2):K_1,K_2,K_1+K_2\in\mathcal{C}(L)\},$$
where the letter $u$ in the subscript refers to union.
Finally we let $\mathcal{C}_0(L)$ denote the set of all strings appearing in all languages in $\mathcal{C}(L)$ or equivalently
$$\mathcal{C}_0(L)=\{s\in\Sigma^+:\{s\}\in\mathcal{C}(L)\}.$$

We refer to the following proposition a few times in this note. It is trivially followed from the definition.
\begin{proposition}\label{proposition:restriction}
Let $L$ be a language and let $K\in\mathcal{C}(L)$. The followings hold 
\begin{align*}
\mathcal{C}_0(K)&\subseteq \mathcal{C}_0(L),\\\mathcal{C}(K)&\subseteq \mathcal{C}(L),\\\mathcal{C}_c(K)&\subseteq \mathcal{C}_c(L),\\\mathcal{C}_u(K)&\subseteq \mathcal{C}_u(L).
\end{align*}
\end{proposition}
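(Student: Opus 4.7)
The plan is to reduce all four containments to a single core claim, namely the second one: if $K\in\mathcal{C}(L)$ then $\mathcal{C}(K)\subseteq\mathcal{C}(L)$. The other three containments then follow immediately by unpacking the definitions of $\mathcal{C}_0$, $\mathcal{C}_c$, and $\mathcal{C}_u$. For $\mathcal{C}_0$: any string $s$ with $\{s\}\in\mathcal{C}(K)$ also satisfies $\{s\}\in\mathcal{C}(L)$, so $s\in\mathcal{C}_0(L)$. For $\mathcal{C}_c$: a pair $(K_1,K_2)\in\mathcal{C}_c(K)$ comes with $K_1$, $K_2$, and $K_1K_2$ all in $\mathcal{C}(K)$, hence all in $\mathcal{C}(L)$, so $(K_1,K_2)\in\mathcal{C}_c(L)$. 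The case of $\mathcal{C}_u$ is identical with $+$ in place of concatenation.

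For the core claim, I would first rephrase the inductive definition as a least-fixed-point characterization: $\mathcal{C}(M)$ is the smallest family $\mathcal{F}$ of languages such that $M\in\mathcal{F}$ and $\mathcal{F}$ is closed under decomposition, in the sense that whenever $N\in\mathcal{F}$ admits $N=N_1N_2$ or $N=N_1+N_2$, then $N_1,N_2\in\mathcal{F}$. That this reformulation is equivalent to the inductive definition given in the text is a routine induction: $\mathcal{C}(L)$ as defined clearly satisfies both closure properties, and conversely every language inserted by the recursive clause can be traced back to $L$ via finitely many splits, so it must lie in the least fixed point.

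Once this characterization is in hand, the desired inclusion $\mathcal{C}(K)\subseteq\mathcal{C}(L)$ is immediate: by hypothesis $K\in\mathcal{C}(L)$, and $\mathcal{C}(L)$ is closed under decomposition, so the family $\mathcal{C}(L)$ satisfies both defining properties for $\mathcal{C}(K)$; minimality of $\mathcal{C}(K)$ then gives $\mathcal{C}(K)\subseteq\mathcal{C}(L)$. I do not expect any real obstacle here; the result is essentially a tautological unwinding of the recursive definition, which is why the author calls it trivial. The only thing to be careful about is distinguishing the two readings of the recursive clause (it defines a set, not a recursive computation on $L$), and the fixed-point reformulation makes that distinction explicit.
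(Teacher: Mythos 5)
Your proof is correct, and it formalizes exactly what the paper intends: the paper offers no argument beyond ``it is trivially followed from the definition,'' and your least-fixed-point reading of the inductive definition (minimality of $\mathcal{C}(K)$ against the decomposition-closed family $\mathcal{C}(L)\ni K$) is the standard way to make that triviality precise. The reduction of the other three containments to $\mathcal{C}(K)\subseteq\mathcal{C}(L)$ is also exactly right, so there is nothing to add.
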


\begin{example}\label{ex:closure 2} It is evident that
	\begin{align*}
		\mathcal{C}_0(\Sigma^n) = \Sigma^{\leq n},
	\end{align*}
	whereas
	\begin{align*}
		\mathcal{C}_0(B(n,k)) = \{s\in\Sigma^{\leq n}: |s|_1 \leq k\}.
	\end{align*}
	We can also verify the followings
	\begin{align*}
	\mathcal{C}_c(\Sigma^n) &= \{(K_1,K_2): \emptyset \subset K_1\subseteq \Sigma^{n_1},\emptyset \subset K_2\subseteq \Sigma^{n_2}, n_1,n_2>0, n_1+n_2\leq n\}\\
	\mathcal{C}_{c}(B(n,k)) &= \{(K_1,K_2): \emptyset \subset K_1\subseteq B(n_1,k_1),\emptyset \subset K_2\subseteq B(n_2,k_2), n_1+n_2\leq n, k_1+k_2\leq k\}.
	\end{align*}
	We introduce the following shorthand notation which is helpful when working with binomial languages
	\begin{align*}
	\mathcal{C}_c(n,k) &= \{(n_1,k_1,n_2,k_2):0<n_1<n,0<n_2<n,0\leq k_1\leq \min(n_1,k),0\leq k_2\leq \min(n_2,k),\\ &\quad\quad n_1+n_2\leq n,k_1+k_2\leq k\},
	\end{align*}
	so we can for example write
	\begin{align*}
	\mathcal{C}_{c}(B(n,k)) &= \{(K_1,K_2): \emptyset \subset K_1\subseteq B(n_1,k_2),\emptyset \subset K_2\subseteq B(n_2,k_2), (n_1,k_1,n_2,k_2)\in \mathcal{C}_c(n,k)\}.
	\end{align*}
\end{example}


For a language $L\subseteq\{0,1\}^n$ a Boolean function is naturally associated. This is the Boolean function $f_L:\{0,1\}^n\rightarrow \{0,1\}$ defined such that $f_L^{-1}(1)=L$.  

\begin{example}[threshold languages]
    Let $\Sigma=\{0,1\}$ be a binary alphabet. Consider the omit language $$T(n,1) = \{s\in\Sigma^n:|s|_1\geq 1\}=\Sigma^n\setminus\{0^n\}.$$ For example we have $$T(3,1)=\{001,010,100,011,101,110,111\}.$$
     It is evident that $f_{T(n,1)}$ is the Boolean function $\OR_n$. More generally there are threshold languages $T(n,k)=\{s\in\Sigma^n:|s|_1\geq k\}$ defined for all $n > 0$ and $0\leq k\leq n$. Again when we write $T(n,k)$ it is implicitly assumed that $n,k$ satisfy the constraints $n > 0$ and $0\leq k\leq n$. 
\end{example}

Next we review the basics of linear programming. First let $\Sigma$ be an alphabet. The vector space $\real^\Sigma$ is naturally associated with this alphabet. We use $\real^+$ to denote the set of nonnegative real numbers. We let ${\real^{+}}^\Sigma$ denote the set of vectors in $\real^\Sigma$ with nonnegative entries. 

Let $\Sigma$ and $\Gamma$ be alphabets, let $A: \real^\Sigma\rightarrow\real^\Gamma$ be a linear transformation, and let $b\in \real^\Sigma$ and $c\in\real^\Gamma$ be vectors. A \emph{linear program} is a triple $(A,b,c)$ for which the following optimization problems are associated
\begin{equation*}
\openup\jot 
\begin{aligned}[t]
\mathcal{P}\quad\text{ maximize:}\quad &b^T x\\
        \text{subject to:}\quad & Ax\leq c,\\
        & x \in {\real^{+}}^\Sigma,
\end{aligned}
\qquad\qquad 
\begin{aligned}[t]
\mathcal{D}\quad\text{ minimize:}\quad &c^T y\\
        \text{subject to:}\quad & A^T y\geq  b,\\
        & y \in {\real^{+}}^\Gamma.
\end{aligned}
\end{equation*}
The problem $\mathcal{P}$ is called the \emph{primal problem} and the problem $\mathcal{D}$ is called the \emph{dual problem}. Let $\{a_i: i\in \Gamma\}\subset \real^\Sigma$ be the rows of $A$. An equivalent way of writing $\mathcal{P}$ and $\mathcal{D}$ is as follows
\begin{equation*}
\openup\jot 
\begin{aligned}[t]
\mathcal{P}\quad\text{ maximize:}\quad &b^T x\\
        \text{subject to:}\quad & a_i^Tx\leq c_i \quad\text{ for all } i \in \Gamma,\\
        & x \in {\real^{+}}^\Sigma,
\end{aligned}
\qquad\qquad 
\begin{aligned}[t]
\mathcal{D}\quad\text{ minimize:}\quad &c^T y\\
        \text{subject to:}\quad & \sum_{i\in\Gamma}a_i y_i\geq  b,\\
        & y \in {\real^{+}}^\Gamma.
\end{aligned}
\end{equation*}

\begin{example}
Let $\Sigma = \Gamma = \{1,2\}$. Fixing the natural ordering $1<2$ on this alphabet we can represent tuples $(x_1,x_2)$ in $\real^\Sigma$ by column vectors $\begin{bmatrix}x_1\\x_2\end{bmatrix}$. Let
$$
A=\begin{bmatrix}1 & -1\\1 & 2\end{bmatrix}, b = \begin{bmatrix}1 \\ 0\end{bmatrix}c=\begin{bmatrix}0\\3\end{bmatrix}.$$
The primal problem associated with the linear program $(A,b,c)$ is
\begin{align*}
\text{ maximize:}\quad &x_1\\
        \text{subject to:}\quad & x_1 - x_2 \leq 0,\\
        & x_1+2x_2 \leq 3,\\
        & x_1,x_2 \geq 0,
\end{align*}
and the dual problem is 
\begin{align*}
\text{ minimize:}\quad &3y_2\\
        \text{subject to:}\quad & y_1\begin{bmatrix}1\\-1\end{bmatrix} + y_2 \begin{bmatrix}1\\2\end{bmatrix}\geq \begin{bmatrix}1\\0\end{bmatrix},\\
        & y_1,y_2\geq 0.
\end{align*}

\end{example}

We say that $\mathcal{P}$ (resp. $\mathcal{D}$) is \emph{feasible} if there exists $\overline{x}$ (resp. $\overline{y}$) satisfying all the constraints in $\mathcal{P}$ (resp. $\mathcal{D}$) in which case we say that $\overline{x}$ is a primal feasible solution (resp. $\overline{y}$ is a dual feasible solution). If there exists no such solution we say that $\mathcal{P}$ (resp. $\mathcal{D}$) is \emph{infeasible}.  

We say that $\mathcal{P}$ is \emph{unbounded} if there exists a sequence of primal feasible solutions $x_1,x_2,\ldots$ such that $b^T x_n \to +\infty$ as $n\to +\infty$. Similarly we say that $\mathcal{D}$ is \emph{unbounded} if there exists a sequence of dual feasible solutions $y_1,y_2,\ldots$ such that $c^T y_n \to -\infty$ as $n\to +\infty$. 

We say that $\mathcal{P}$ is \emph{optimal} if there exists a feasible solution $\overline{x}$ such that for all feasible solutions $\hat{x}$ it holds that $b^T \overline{x} \geq b^T \hat{x}$. In such a case we say that $\overline{x}$ is an \emph{optimal solution} and the optimum value of $\mathcal{P}$, denoted by $\opt(\mathcal{P})$, is $b^T\overline{x}$. Similarly we can define optimal dual problems by replacing $\geq$ by $\leq$ in the definition of optimal primal problems.

\begin{theorem}[Fundamental Theorem of Linear Programming] Any optimization problem in the form of $\mathcal{P}$ or $\mathcal{D}$ is either infeasible, unbounded or optimal.
\end{theorem}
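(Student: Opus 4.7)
The plan is to reduce the trichotomy to a single non-trivial implication, namely: if $\mathcal{P}$ is feasible and $b^Tx$ is bounded above on the feasible region, then the supremum is attained (so $\mathcal{P}$ is optimal). The three cases are clearly mutually exclusive from the definitions (an unbounded $\mathcal{P}$ is feasible by definition, but has no finite optimum), so exclusivity is free. By symmetry (swap max with min, flip inequality directions, negate $b$), the argument for $\mathcal{D}$ is identical to the one for $\mathcal{P}$, so I will only write up the primal case.

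For the attainment step, I would exploit the polyhedral structure of the feasible region $F=\{x\in{\real^+}^\Sigma : Ax\leq c\}$, which is a closed convex polyhedron in the finite-dimensional space $\real^\Sigma$ (finite because $\Sigma$ is an alphabet). By the Minkowski–Weyl decomposition theorem, there exist finite sets $V,R\subset\real^\Sigma$ such that every $x\in F$ can be written as $x=\sum_i \lambda_i v_i+\sum_j \mu_j r_j$ with $v_i\in V$, $r_j\in R$, $\lambda_i,\mu_j\geq 0$, and $\sum_i\lambda_i=1$. Here $V$ is the set of vertices of $F$ and $R$ is the set of its extreme rays, both contained in $F$ (rays as recession directions). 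Applying $b^T$ to this decomposition yields
\[
b^Tx \;=\; \sum_i \lambda_i(b^Tv_i)+\sum_j \mu_j(b^Tr_j).
\]
If some $b^Tr_j>0$, then taking $x+tr_j$ for any feasible $x$ and letting $t\to\infty$ produces a sequence of feasible solutions with $b^T(x+tr_j)\to+\infty$, so $\mathcal{P}$ is unbounded and we are in the already-handled case. Otherwise $b^Tr_j\leq 0$ for all $j$, hence $b^Tx\leq \max_i b^Tv_i$, and this maximum is attained at the corresponding vertex $v_{i^\star}\in V\subseteq F$. Therefore $v_{i^\star}$ is an optimal solution and $\opt(\mathcal{P})=b^Tv_{i^\star}$.

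The main obstacle is invoking the Minkowski–Weyl decomposition, which is itself a nontrivial theorem that in turn rests on either Farkas' lemma or the equivalence between $\mathcal{H}$-polyhedra and $\mathcal{V}$-polyhedra. I would treat this as a black-box citation from polyhedral combinatorics rather than reprove it here, since a self-contained derivation would distract from the regular-expression applications that are the subject of this note. An alternative route avoiding Minkowski–Weyl would be to run the simplex method with an anti-cycling pivoting rule (e.g.\ Bland's rule) and argue that it terminates in finitely many steps at either a vertex of unboundedness or an optimal vertex; the bookkeeping there is heavier, so I prefer the polyhedral-decomposition approach.
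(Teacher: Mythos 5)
The paper does not prove this statement at all: it is quoted in the preliminaries as a classical background fact (alongside weak duality and the duality theorem), so there is no in-paper argument to compare yours against. Your proof is a correct and entirely standard one. The reduction of the trichotomy to the single implication ``feasible and bounded above $\Rightarrow$ attained'' is the right move, and the Minkowski--Weyl decomposition $F=\conv(V)+\mathrm{cone}(R)$ with finite $V,R$ does deliver attainment at a point of $V$ once every recession direction $r_j$ satisfies $b^Tr_j\leq 0$. Two minor points worth tightening if you were to write this out in full: (i) calling $V$ ``the set of vertices'' is justified here only because the constraint $x\in{\real^{+}}^\Sigma$ makes $F$ pointed (it contains no line); for a general polyhedron the decomposition still holds but $V$ need not consist of vertices, and it is cleaner to just take $V$ to be any finite generating set, noting $V\neq\emptyset$ whenever $F\neq\emptyset$ because of the condition $\sum_i\lambda_i=1$; (ii) the phrase ``$R$ \ldots contained in $F$'' should be replaced by the statement you actually use, namely that each $r_j$ is a recession direction, i.e.\ $x+tr_j\in F$ for all $x\in F$ and $t\geq 0$. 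Since the paper treats the theorem as a citation, leaning on Minkowski--Weyl as a black box is consistent with the level of detail the note itself adopts; your alternative via the simplex method with Bland's rule would also work but, as you say, is heavier and buys nothing here.
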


Suppose $\overline{x}$ and $\overline{y}$ are arbitrary primal and dual feasible solutions respectively. Then it holds that $b^T \overline{x} \leq c^T \overline{y}$. This is known as \emph{weak duality}. If in addition it holds that $b^T \overline{x} = c^T \overline{y}$, then $\overline{x}$ and $\overline{y}$ are both optimal in their respective problems. In our example $\overline{x} = \begin{bmatrix}1\\1\end{bmatrix}$ and $\overline{y} = \begin{bmatrix}\frac{2}{3}\\\frac{1}{3}\end{bmatrix}$ are primal and dual feasible solutions respectively and their objective values are the same. Therefore from weak duality both $\overline{x}$ and $\overline{y}$ are optimal solutions. 

\begin{theorem}[Duality Theorem] If both $\mathcal{P}$ and $\mathcal{D}$ have feasible solutions, then they both have optimal solutions and their optimum values are the same. 
\end{theorem}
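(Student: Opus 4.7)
The plan is to derive the Duality Theorem from weak duality together with a theorem of alternatives (Farkas' Lemma), which in turn rests on a separating-hyperplane argument for closed convex cones.

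First I would formalize \emph{weak duality} explicitly: if $\overline{x}$ is primal feasible and $\overline{y}$ is dual feasible, then
$$b^T \overline{x} \;\leq\; (A^T \overline{y})^T \overline{x} \;=\; \overline{y}^T (A\overline{x}) \;\leq\; \overline{y}^T c,$$
where the two inequalities use $A^T \overline{y} \geq b$, $\overline{x} \geq 0$ and $A\overline{x} \leq c$, $\overline{y} \geq 0$ respectively. The hypothesis of the theorem gives a primal feasible $\overline{x}_0$ and a dual feasible $\overline{y}_0$, so by weak duality neither problem is unbounded. By the Fundamental Theorem, both $\mathcal{P}$ and $\mathcal{D}$ are optimal; let $\alpha=\opt(\mathcal{P})$ and $\beta=\opt(\mathcal{D})$, and weak duality already yields $\alpha \leq \beta$.

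The heart of the argument is showing $\alpha \geq \beta$. I would argue by contradiction: suppose $\alpha < \beta$ and pick $\gamma$ with $\alpha < \gamma \leq \beta$. Since $\alpha$ is the primal optimum, the system
$$Ax \leq c, \qquad -b^T x \leq -\gamma, \qquad x \geq 0$$
is infeasible. I would then invoke Farkas' Lemma in the form: the system $Mx\leq d$, $x\geq 0$ is infeasible if and only if there exists $z\geq 0$ with $M^T z \geq 0$ and $d^T z < 0$. Applying this to the augmented matrix and right-hand side above produces nonnegative multipliers $y \in \real^{\Gamma}$ and $\lambda \in \real$ satisfying
$$A^T y - \lambda b \;\geq\; 0, \qquad c^T y - \lambda \gamma \;<\; 0.$$
If $\lambda > 0$, then $y/\lambda$ is feasible for $\mathcal{D}$ with objective value $c^T y / \lambda < \gamma \leq \beta$, contradicting the optimality of $\beta$. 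If $\lambda = 0$, then $y$ satisfies $A^T y \geq 0$ and $c^T y < 0$; adding a large positive multiple of $y$ to the known dual feasible $\overline{y}_0$ preserves dual feasibility (since $A^T(\overline{y}_0 + ty) \geq A^T \overline{y}_0 \geq b$ and nonnegativity is preserved) while driving $c^T(\overline{y}_0 + ty) \to -\infty$, contradicting that $\mathcal{D}$ is not unbounded. Either way we reach a contradiction, so $\alpha = \beta$.

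The main obstacle is Farkas' Lemma itself, whose proof is the genuine analytic content: one shows that the cone $\{Mx : x \geq 0\}$ is closed and then separates the point $d$ from this cone by a hyperplane when $d$ does not lie in it. I would state and quote Farkas' Lemma as a standard result (it is proved in every linear programming textbook via the closedness of finitely generated cones or equivalently via Fourier--Motzkin elimination), rather than reprove it here, since the paper's focus is on regular expressions rather than convex analysis.
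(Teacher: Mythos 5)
The paper does not prove this theorem at all: it is quoted in the Preliminaries as a standard background result from linear programming, alongside the Fundamental Theorem and weak duality, and is simply used later (e.g.\ in the corollary to Theorem \ref{thm:main}). So there is no ``paper's proof'' to compare against. Your argument is the standard textbook derivation and it is correct as far as it goes: weak duality gives boundedness of both problems and $\opt(\mathcal{P})\leq\opt(\mathcal{D})$, the Fundamental Theorem upgrades ``feasible and bounded'' to ``optimal,'' and the reverse inequality follows from the variant of Farkas' Lemma for systems $Mx\leq d$, $x\geq 0$ applied to the augmented system $Ax\leq c$, $-b^Tx\leq-\gamma$, $x\geq 0$. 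Your case split on the multiplier $\lambda$ is handled correctly in both branches (in particular, the $\lambda=0$ branch correctly uses the known dual feasible point to manufacture an unbounded ray, contradicting boundedness). The one honest caveat, which you acknowledge yourself, is that all the genuine analytic content is pushed into Farkas' Lemma, which you quote rather than prove; since the paper itself quotes the Duality Theorem without proof, deferring to a standard reference at that level is entirely consistent with the paper's conventions.
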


In this note we associate linear programs to languages. The alphabets $\Sigma$ and $\Gamma$ in the definition of these linear programs are often the sets $\mathcal{C}_0(L)$ or $\mathcal{C}(L)$ for some language $L$.

\section{Notations}\label{section:notations}
The following notations are fixed in the rest of this note. 
We reserve the letter $R$ for regular expressions, the letter $a$ for alphabet symbols, and the letters $r,s,t$ for strings. Letters $B$ and $T$ denote binomial and threshold languages respectively. Letters $K,L$ are also reserved for languages whereas letters $E,I,J$ denote special vectors we define later. 

Letters $i,j,k,l,m,n,p,q$ always denote nonnegative (and most often positive) integers. Letters $\alpha$ and $\beta$ are reserved for real numbers. The set of integers is denoted by $\mathbb{N}$. Letter $g$ always denotes a map $\Sigma^+\rightarrow \real^{+}$ whereas $f$ denotes Boolean functions $\Sigma^n\rightarrow \{0,1\}$. The natural logarithm is denoted by $\ln$, whereas the binary logarithm is denoted by $\log$. The base of the natural logarithm is denoted by $e=2.7182\cdots$.

Letters $w,x,y,z,W,X,Y,Z$ (usually with a string or a language in the subscript) are reserved for variables appearing in optimization problems. The two scripted letters $\mathcal{P}$ and $\mathcal{D}$ refer to primal and dual problems. We associate a linear program to a language, for example we often write $\mathcal{P}(L)$. 

\section{Linear programming formulations}\label{section:formulations}
    We present two linear programming formulations associated with languages. We start with the \emph{weak formulation} and later in the section we define the \emph{strong formulation}. The use of the words weak and strong becomes clear later in the section.
    
    Let $\Sigma$ be an alphabet and $L$ be a language over $\Sigma$. We associate an optimization problem $\mathcal{P}(L)$ to $L$ as follows
    \begin{align*}
        \mathcal{P}(L)\quad\text{ maximize:}\quad &\sum_{s\in L}x_s\\
        \text{subject to:}\quad & \sum_{s\in K_1K_2}x_s \leq \sum_{s\in K_1}x_s + \sum_{s\in K_2}x_s  \quad\text{ for all } (K_1,K_2)\in \mathcal{C}_c(L),\\
        &0 \leq x_s \leq |s|\quad\text{ for all } s \in \mathcal{C}_0(L).
    \end{align*}
    This is evidently a linear programming problem in the primal form. The dual of $\mathcal{P}(L)$ is calculated in the next section. The problem is feasible; setting $x_s=0$ for all $s\in\mathcal{C}_0(L)$ is a trivially feasible solution. It is also bounded, as seen by the constraints $x_s\leq |s|$. So by the fundamental theorem of linear programming there exists an optimal solution for $\mathcal{P}(L)$. 
    \begin{example}\label{ex:simple primal}
    Let $\Sigma=\{0\}$ be a unary alphabet, and consider the language $L=\{00,000\}\subset\Sigma$. The closure is evidently 
    \begin{align*}
    \mathcal{C}(L)=\{0,00,000,0+00,00+000\},
    \end{align*} and 
    \begin{align*}
    \mathcal{C}_0(L)&=\{0,00,000\},\\
    \mathcal{C}_c(L) &= \{(0,0),(0,00),(00,0),(0+00,0),(0,0+00)\}.
    \end{align*} Consequently the problem $\mathcal{P}(L)$, after removing the redundant constraints, is as follows
    \begin{align*}
        \quad\text{ minimize:}\quad & x_{00}+x_{000}\\
        \text{subject to:}\quad &x_{00}\leq  2x_0,\\
        &x_{000}\leq x_0+x_{00},\\
        &x_{00}+x_{000}\leq 2x_{0}+x_{00},\\
        &x_0\leq 1,\\
        &x_{00}\leq 2,\\
        &x_{000}\leq 3,\\
        &x_0,x_{00},x_{000}\geq 0.
    \end{align*}
    We can verify that $\overline{x}_0=1,\overline{x}_{00}=\overline{x}_{000}=2$ is a feasible solution, and its objective value is $4$.
    \end{example}
    
    One caveat of this formulation is its size. Luckily we are not looking for an efficient algorithm for solving $\mathcal{P}(L)$.\footnote{In fact an efficient algorithm should not exist if we accept basic complexity-theoretic assumptions as explained in Section \ref{section:complexity}.} Instead we use this formulation to prove lower bounds on the length of optimal regular expressions of $L$. 
    We see another caveat of this formulation related to proving lower bounds in Example \ref{ex:caveat}. For now we prove that the objective value of any feasible solution of $\mathcal{P}(L)$ is a lower bound on the length of any regular expression of $L$.
    \begin{theorem}\label{thm:lowerbound}
        Let $\overline{x}$ be a feasible solution of $\mathcal{P}(L)$  and let $R$ be a regular expression of $L$. It holds that
        \begin{align*}
            \sum_{s\in L}\overline{x}_s \leq |R|.
        \end{align*}
        In particular it holds that $\opt(\mathcal{P}(L))\leq |R|$.
    \end{theorem}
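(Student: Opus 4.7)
The statement follows by structural induction on the regular expression, strengthened appropriately. The natural induction hypothesis is the following: for every subexpression $R'$ of $R$, the language $L(R')$ lies in $\mathcal{C}(L)$, and $\sum_{s\in L(R')}\overline{x}_s \leq |R'|$. Since $R$ itself is a subexpression of $R$ with $L(R)=L$, applying the strengthened statement to $R'=R$ yields the theorem.

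\textbf{Closure membership.} Before running the induction, I would observe that for every subexpression $R'$ of $R$ the language $L(R')$ belongs to $\mathcal{C}(L)$: this is by induction on the expression, using that $\mathcal{C}(L)$ contains $L$ itself and, by definition, if $L=L_1L_2$ or $L=L_1+L_2$, then it contains $L_1$, $L_2$ and their closures. The same reasoning shows that whenever $R'=R_1R_2$ is a subexpression of $R$, the pair $(L(R_1),L(R_2))$ lies in $\mathcal{C}_c(L)$, so the concatenation constraint in $\mathcal{P}(L)$ is available.

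\textbf{Induction.} For the base case, $R'$ is a single alphabet symbol $a$, so $L(R')=\{a\}$ and $|R'|=1$. The variable $\overline{x}_a$ satisfies the bound $\overline{x}_a\leq |a|=1=|R'|$ from the feasibility of $\overline{x}$. For the inductive step, suppose the claim holds for $R_1$ and $R_2$. If $R'=R_1R_2$, then using the concatenation constraint followed by the inductive hypothesis,
\begin{equation*}
\sum_{s\in L(R')}\overline{x}_s \;=\; \sum_{s\in L(R_1)L(R_2)}\overline{x}_s \;\leq\; \sum_{s\in L(R_1)}\overline{x}_s + \sum_{s\in L(R_2)}\overline{x}_s \;\leq\; |R_1|+|R_2| \;=\; |R'|.
\end{equation*}
If $R'=R_1+R_2$, then $L(R')=L(R_1)\cup L(R_2)$, and by nonnegativity of $\overline{x}$,
\begin{equation*}
\sum_{s\in L(R')}\overline{x}_s \;\leq\; \sum_{s\in L(R_1)}\overline{x}_s + \sum_{s\in L(R_2)}\overline{x}_s \;\leq\; |R_1|+|R_2| \;=\; |R'|,
\end{equation*}
where the first inequality uses only that $\overline{x}_s\geq 0$, so no explicit union constraint is required.

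\textbf{Expected obstacle.} The only nontrivial ingredient is verifying that the relevant closure memberships hold at every step, in particular that $(L(R_1),L(R_2))\in\mathcal{C}_c(L)$ whenever $R_1R_2$ appears as a subexpression of $R$; this is why the strengthened inductive statement is phrased for every subexpression rather than only for $R$ itself. Once the closure structure is handled, the arithmetic is immediate and the bound $\opt(\mathcal{P}(L))\leq |R|$ follows by taking $\overline{x}$ to be an optimal primal solution.
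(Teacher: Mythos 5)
Your proof is correct and follows essentially the same inductive argument as the paper: base case from the bound $\overline{x}_a\leq 1$, the concatenation constraint for $R_1R_2$, and nonnegativity alone for unions. The only organizational difference is bookkeeping — the paper restricts $\overline{x}$ to a feasible solution of $\mathcal{P}(L(R_i))$ via Proposition \ref{proposition:restriction} and applies the theorem to the smaller language, whereas you strengthen the inductive statement over subexpressions and stay inside the single program $\mathcal{P}(L)$; both rest on the same closure facts and are equally valid.
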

    \begin{proof}
        Let $R$ be a regular expression of $L$. The proof is by induction on $|R|$. When $R$ is an alphabet letter the proof is evident from the observation that $\overline{x}_a\leq 1$ for all $a\in\Sigma$ as $\overline{x}$ is a feasible solution. Now suppose $R$ is not a single term; so it is either $R=(R_1+R_2)$ or $R=R_1R_2$. In both of these cases, the restriction of $\overline{x}$ to $\mathcal{C}_0(L(R_1))$ is a feasible solution of $\mathcal{P}(L(R_1))$. This follows from Proposition \ref{proposition:restriction} and the observation that $L(R_1)\in \mathcal{C}(L)$. So by induction hypothesis it holds that $\sum_{s\in L(R_1)} \overline{x}_s\leq |R_1|$. Similarly it holds that $\sum_{s\in L(R_2)} \overline{x}_s\leq |R_2|$. When $R=(R_1+R_2)$, by feasibility of $\overline{x}$ we can write $$\sum_{s\in L}\overline{x}_s=\sum_{s\in L(R_1)\cup L(R_2)}\overline{x}_s \leq \sum_{s\in L(R_1)}\overline{x}_s + \sum_{s\in L(R_2)}\overline{x}_s \leq |R_1|+|R_2|=|(R_1+R_2)|=|R|$$ which completes the proof for this case. Now suppose $R=R_1R_2$ and observe that $(L(R_1),L(R_2))\in \mathcal{C}_c(L)$. So from feasibility of $\overline{x}$ we can write
        \begin{align*}
			\sum_{s\in L} \overline{x}_s = \sum_{s\in L(R_1)L(R_2)} \overline{x}_s \leq \sum_{s\in L(R_1)} \overline{x}_s + \sum_{s\in L(R_2)} \overline{x}_s\leq |R_1|+|R_2|=|R_1R_2|=|R|.
        \end{align*}
    \end{proof}
    Recall the language $L$ in Example \ref{ex:simple primal}. In the light of Theorem \ref{thm:lowerbound} and the feasible solution for $\mathcal{P}(L)$ given in that example, the regular expression $(0+00)0$ is optimal for $L$.
    
    \begin{example}\label{ex:all strings primal} Let $\Sigma=\{0,1\}$ be the binary alphabet and consider the language $L=\Sigma^n$ for some $n > 0$. A regular expression for this language is $R=(0+1)^n$ which is of length $2n$. We give a feasible solution $\overline{x}$ of $\mathcal{P}(\Sigma^n)$ with an objective value of $2n$, and by doing so we prove that $R$ is optimal. First recall Examples \ref{ex:closure 2} where we calculated $\mathcal{C}_0(\Sigma^n)$ and $\mathcal{C}_c(\Sigma^n)$. Let $\overline{x}_s = \frac{|s|}{|\Sigma|^{|s|-1}}$ for all $s\in\Sigma^{\leq n}=\mathcal{C}_0(\Sigma^n)$. The objective value of $\overline{x}$ is easily seen to be $|\Sigma^n|\frac{n}{|\Sigma|^{n-1}} = |\Sigma|n=2n$. Next we verify that the constraints in $\mathcal{P}(\Sigma^n)$ are satisfied. It holds trivially that $0\leq\overline{x}_s\leq |s|$ for all $s\in\Sigma^{\leq n}$. Now suppose $(K_1,K_2)\in \mathcal{C}_c(\Sigma^n)$. From Example \ref{ex:closure 2}, there exist $n_1$ and $n_2$ such that $0 < n_1,n_2< n, n_1+n_2\leq n$ and $K_1\subseteq \Sigma^{n_1}$ and $K_2\subseteq \Sigma^{n_2}$. On one hand we have $$\sum_{s\in K_1K_2} \overline{x}_s = |K_1||K_2|\frac{n_1+n_2}{|\Sigma|^{n_1+n_2-1}},$$
    and on the other hand we have $$\sum_{s\in K_1} \overline{x}_s = |K_1|\frac{n_1}{|\Sigma|^{n_1-1}}$$ and $$\sum_{s\in K_2} \overline{x}_s = |K_2|\frac{n_2}{|\Sigma|^{n_2-1}}.$$ So we must show that $$|K_1||K_2|\frac{n_1+n_2}{|\Sigma|^{n_1+n_2-1}}\leq |K_1|\frac{n_1}{|\Sigma|^{n_1-1}} + |K_2|\frac{n_2}{|\Sigma|^{n_2-1}}$$ or equivalently
    $$\frac{n_1+n_2}{|\Sigma|^{n_1+n_2-1}}\leq \frac{1}{|K_2|}\frac{n_1}{|\Sigma|^{n_1-1}} + \frac{1}{|K_1|}\frac{n_2}{|\Sigma|^{n_2-1}}.$$ Since the right hand size is minimized when $K_1=\Sigma^{n_1}$ and $K_2=\Sigma^{n_2}$, it suffices to prove the inequality only for this case. In this case the inequality turns into an equality. This finishes the proof of the claim that $\overline{x}$ is a feasible solution.
    \end{example}
    
    \begin{example}\label{ex:binomial} 
    	Let $\Sigma=\{0,1\}$ be the binary alphabet and consider the binomial language $B(n,1)$ for some $n > 0$. A regular expression $R_n$ for this language is given in \cite{ellul-regular-expressions} recursively by letting $R_1 = 1$ and $$R_n = \Bigl(0^{\lfloor\frac{n}{2}\rfloor}R_{\lceil\frac{n}{2}\rceil}+R_{\lfloor\frac{n}{2}\rfloor}0^{\lceil\frac{n}{2}\rceil}\Bigr)$$ for all $n > 1$. They also showed that $|R_n|=\lceil n\log(2n)\rceil$. Using Theorem \ref{thm:lowerbound} we show that $R_n$ is asymptotically optimal. We do so by arguing that $\overline{x}$ defined as follows is a feasible solution of  $\mathcal{P}({B(n,1)})$. First recall Example \ref{ex:closure 2} where we calculated $\mathcal{C}_0(B(n,1))$ and $\mathcal{C}_c(B(n,1))$. For all $s\in\Sigma^{\leq n}$ for which $|s|_1\leq 1$ let
    	\begin{align*}
    	\overline{x}_s=\begin{cases} 
    	|s| & |s|_1=0, \\
    	\ln(e|s|) & |s|_1=1.
    	\end{cases}
    	\end{align*}
    	The presence of the constant $e$ here is not conveying any significant truth (except that it looks nicer to have $\overline{x}_0=\overline{x}_1 =1$).
    	Let $(K_1, K_2)\in\mathcal{C}_c(B(n,1))$. We must prove the inequality $$\sum_{s\in K_1K_2}\overline{x}_s \leq \sum_{s\in K_1}\overline{x}_s + \sum_{s\in K_2}\overline{x}_s.$$ First note that from Example \ref{ex:closure 2} there exist $(n_1,k_1,n_2,k_2)\in \mathcal{C}_c(n,k)$ such that $K_1\subseteq B(n_1,k_1)$  and $K_2\subseteq B(n_2,k_2)$. There are three cases to consider; we have that $(k_1,k_2)$ is either $(0,0), (1,0)$ or $(0,1)$. The simplest case is when $k_1=0$ and $k_2=0$. For this case the inequality holds trivially. Now suppose $k_1=1$ and $k_2 = 0$. For this case the inequality is equivalent to 
    	\begin{align}\label{eq:example:binomial:inequality}
    	|K_1|\ln(e(n_1+n_2))\leq |K_1|\ln(en_1) + n_2,
    	\end{align}
    	which can also be written as
    	$$|K_1|\ln(1+\frac{n_2}{n_1})\leq n_2.$$
    	The inequality follows with the observations that $\ln(1+x)\leq x$ for all $x\geq 0$ and also that $|K_1|\leq n_1$. The remaining case of $k_1=0$ and $k_2=1$ is symmetric to this one. Finally note that the objective value of $\overline{x}$ is $\sum_{s\in B(n,1)}\overline{x}_s = n\ln(en)$. 
    \end{example}
    \begin{problem} What is the quantity $\opt(P(B(n,1)))$? We conjecture that it equals $\lceil n\log(2n)\rceil$. 
    	
        We verified this conjecture for $n\leq 16$ using the CVXPY \cite{cvxpy} library. The optimal solution obtained for $n=8$ is as follows (showing only 2 decimal places):
    	\begin{center}
    		\begin{tabular}{ c c c c c c c c c c }
    			0   &1.00   &0000  &4.00    &01000  & 3.14 &0000000 & 7.00    & 00000000 & 8.00\\ 
    			1   &1.00   &0001  &3.11 &10000  & 3.73 &0000001 & 4.54 & 00000001 & 4.93\\
    			00  &2.00   &0010  &2.88 &000000 & 6.00    &0000010 & 3.83 & 00000010 & 4.16\\
    			01  &2.00   &0100  &2.88 &000001 & 4.15 &0000100 & 3.39 & 00000100 & 3.59\\
    			10  &2.00   &1000  &3.11 &000010 & 3.51 &0001000 & 3.22 & 00001000 & 3.30\\
    			000 &3.00   &00000 &5.00    &000100 & 3.16 &0010000 & 3.39 & 00010000 & 3.30\\
    			001 &2.92&00001 &3.73 &001000 & 3.16 &0100000 & 3.83 & 00100000 & 3.59\\
    			010 &2.07&00010 &3.14 &010000 & 3.51 &1000000 & 4.54 & 01000000 & 4.16\\
    			100 &2.92&00100 &2.96 &100000 & 4.15 &        &      & 10000000 & 4.93\\
    		\end{tabular}
    	\end{center}
    \end{problem}
    In fact we conjecture that there is no gap between $\opt(\mathcal{P}(B(n,k)))$ and the length of optimal regular expressions of $B(n,k)$ for all values of $k$. See Open Problem \ref{problem:collapse}. This is not true in general as shown next.
    
    Our next example demonstrates that for some languages the weak formulation fails to prove a good lower bound. In the example we present a family of languages $\{L_n: n > 0\}$ for which the gap between $\opt(\mathcal{P}(L_n))$ and the length of optimal regular expressions of $L_n$ is unbounded as $n$ grows. 
    \begin{example}\label{ex:caveat}
    	Recall the threshold languages $T(n,1)$. Again from \cite{ellul-regular-expressions} we have that the relation $$T(n,1)=\Sigma^{\lfloor\frac{n}{2}\rfloor}T(\lceil\frac{n}{2}\rceil,1)+T(\lfloor\frac{n}{2}\rfloor,1)\Sigma^{\lceil\frac{n}{2}\rceil},$$ provides us with a regular expression of $T(n,1)$ of length $\lceil2n\log(2n)\rceil$. We see in Proposition \ref{prop:threshold} that this is asymptotically optimal.
    	On the other hand, we show that $$\opt(\mathcal{P}(T(n,1))) = O(n).$$
    	Suppose $\overline{x}$ is a feasible solution of $\mathcal{P}(T(n,1))$. It holds that
    	\begin{align*}
    	\sum_{s\in T(n,1)} \overline{x}_s &= \sum_{s\in \Sigma^{\lfloor\frac{n}{2}\rfloor}T(\lceil\frac{n}{2}\rceil,1)+T(\lfloor\frac{n}{2}\rfloor,1)\Sigma^{\lceil\frac{n}{2}\rceil}}\overline{x}_s\\&\leq \sum_{s\in \Sigma^{\lfloor\frac{n}{2}\rfloor}T(\lceil\frac{n}{2}\rceil,1)}\overline{x}_s +  \sum_{s\in T(\lfloor\frac{n}{2}\rfloor,1)\Sigma^{\lceil\frac{n}{2}\rceil}}\overline{x}_s\\
    	&\leq \sum_{s\in \Sigma^{\lfloor\frac{n}{2}\rfloor}}\overline{x}_s + 
    	\sum_{s\in T(\lceil\frac{n}{2}\rceil,1)}\overline{x}_s
    	+\sum_{s\in T(\lfloor\frac{n}{2}\rfloor,1)}\overline{x}_s+ \sum_{s\in \Sigma^{\lceil\frac{n}{2}\rceil}}\overline{x}_s\\
    	&\leq 2\sum_{s\in\Sigma^{ \lfloor \frac{n}{2}\rfloor}}\overline{x}_s + 2\sum_{s\in\Sigma^{\lceil\frac{n}{2}\rceil}}\overline{x}_s\\
    	&\leq 2\opt(\mathcal{P}(\Sigma^{ \lfloor \frac{n}{2}\rfloor})) + 2\opt(\mathcal{P}(\Sigma^{ \lceil \frac{n}{2}\rceil}))\\
    	&= 4\lfloor\frac{n}{2}\rfloor + 4\lceil\frac{n}{2}\rceil = 4n.
    	\end{align*}
    	To obtain the second to last inequality we used the fact that $$\Sigma^{ \lfloor \frac{n}{2}\rfloor}=T(\lfloor\frac{n}{2}\rfloor,1) + 0^{ \lfloor \frac{n}{2}\rfloor}$$ thus
    	$$\sum_{s\in T(\lfloor\frac{n}{2}\rfloor,1)}\overline{x}_s \leq \sum_{s\in \Sigma^{\lfloor\frac{n}{2}\rfloor}}\overline{x}_s.$$
    	Similarly it holds that
    	$$\sum_{s\in T(\lceil\frac{n}{2}\rceil,1)}\overline{x}_s \leq \sum_{s\in \Sigma^{\lceil\frac{n}{2}\rceil}}\overline{x}_s.$$
    	
    	 For the last inequality we used the fact that $\overline{x}$ when restricted to $\Sigma^{\leq \lfloor\frac{n}{2}\rfloor}$ is a feasible solution of $\mathcal{P}(\Sigma^{\lfloor\frac{n}{2}\rfloor})$. This follows from Proposition \ref{proposition:restriction} and the observation that $\Sigma^{\lfloor\frac{n}{2}\rfloor} \in \mathcal{C}(T(n,1))$. Therefore it holds that $$\sum_{s\in\Sigma^{ \lfloor \frac{n}{2}\rfloor}}\overline{x}_s \leq \opt(\mathcal{P}(\Sigma^{ \lfloor \frac{n}{2}\rfloor})),$$ and similarly $$\sum_{s\in\Sigma^{ \lceil \frac{n}{2}\rceil}}\overline{x}_s \leq \opt(\mathcal{P}(\Sigma^{ \lceil \frac{n}{2}\rceil})).$$ We showed in Example \ref{ex:all strings primal} that $\opt(\mathcal{P}(\Sigma^{\lfloor\frac{n}{2}\rfloor})) = 2\lfloor\frac{n}{2}\rfloor$ and $\opt(\mathcal{P}(\Sigma^{\lceil\frac{n}{2}\rceil})) = 2\lceil\frac{n}{2}\rceil$. 
    \end{example}
    
    We now give a stronger linear programming formulation for which the optimal value always coincides with the length of optimal regular expressions. The primal problem of the \emph{strong linear programming formulation} associated with a language $L$ is denoted by $\mathcal{P}_S(L)$ and is defined as follows:
    \begin{align*}
    \mathcal{P}_S(L)\quad\text{ maximize:}\quad &X_L\\
    \text{subject to:}\quad & X_{K_1K_2}\leq X_{K_1}+X_{K_2}  \quad\quad\text{ for all } (K_1,K_2)\in \mathcal{C}_c(L)\\
    & X_{K_1+K_2}\leq X_{K_1}+X_{K_2}  \quad\text{ for all } (K_1,K_2)\in \mathcal{C}_u(L),\\
    &X_s\leq |s|\quad\text{ for all }s\in\mathcal{C}_0(L),\\
    &X_K\geq 0\text{ for all } K\in\mathcal{C}(L).
    \end{align*}
    The subscripts of $X$ are languages in $\mathcal{C}(L)$, i.e., $X\in{\real^+}^{\mathcal{C}(L)}$. It should however be noted that for $s\in \mathcal{C}_0(L)$ we use $X_s$ as a shorthand for $X_{\{s\}}$.
    
    The problem $\mathcal{P}_S(L)$ is feasible; setting $X_K=0$ for all $K\in\mathcal{C}(L)$ is a trivially feasible solution. It is also bounded as seen by the constraints $X_s\leq |s|$ which together with the constraints $X_{K_1+K_2}\leq X_{K_1}+X_{K_2}$ imply that $X_L\leq \sum_{s\in L} |s|$ for every feasible solution $X$. So there must exist an optimal solution for this problem.
    Also note that every feasible solution $\overline{x}$ of $\mathcal{P}(L)$ can be turned into a feasible solution $\overline{X}$ of $\mathcal{P}_S(L)$ by letting $\overline{X}_K = \sum_{s\in K} \overline{x}_s$. Moreover the objective values of $\overline{X}$ and $\overline{x}$ are the same. Therefore it holds that $$\opt(\mathcal{P}(L))\leq \opt(\mathcal{P}_S(L)).$$

    \begin{theorem}\label{thm:strong} It holds that $\opt(\mathcal{P}_S(L))$ equals the length of optimal regular expressions of $L$. 
    \end{theorem}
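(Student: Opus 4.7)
The plan is to prove two inequalities separately. First I would establish the upper bound $\opt(\mathcal{P}_S(L)) \leq |R|$ for every regular expression $R$ of $L$, which then implies $\opt(\mathcal{P}_S(L))$ is at most the length of any optimal regular expression. Second, I would construct an explicit feasible solution $\overline{X}$ whose objective value equals the length of an optimal regular expression, giving the matching lower bound.

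For the upper bound direction, I would closely mimic the induction in the proof of Theorem \ref{thm:lowerbound}. Fix a feasible solution $\overline{X}$ of $\mathcal{P}_S(L)$ and a regular expression $R$ of $L$; by induction on $|R|$ I would show $\overline{X}_L \leq |R|$. The base case $R = a$ is immediate from $\overline{X}_a \leq 1$. For the inductive step, Proposition \ref{proposition:restriction} ensures that the restriction of $\overline{X}$ to $\mathcal{C}(L(R_i))$ is a feasible solution of $\mathcal{P}_S(L(R_i))$ for $i=1,2$, so the induction hypothesis gives $\overline{X}_{L(R_i)} \leq |R_i|$. If $R = R_1 R_2$ then $(L(R_1), L(R_2)) \in \mathcal{C}_c(L)$, and the concatenation constraint yields $\overline{X}_L \leq \overline{X}_{L(R_1)} + \overline{X}_{L(R_2)} \leq |R|$. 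If $R = (R_1 + R_2)$, the same argument works via the union constraint $\overline{X}_{L(R_1)+L(R_2)} \leq \overline{X}_{L(R_1)} + \overline{X}_{L(R_2)}$; this is precisely where the strong formulation does more work than $\mathcal{P}(L)$, since the weak version had no dedicated union variable and could only appeal to summation over individual strings.

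For the matching lower bound, I would define a candidate feasible $\overline{X}$ by setting $\overline{X}_K$ to be the length of an optimal regular expression of $K$, for every $K \in \mathcal{C}(L)$. This is well defined because every $K \in \mathcal{C}(L)$ is a nonempty finite language and therefore admits at least one regular expression in the sense of the paper. Verifying feasibility is routine: for $(K_1,K_2) \in \mathcal{C}_c(L)$, concatenating optimal regular expressions of $K_1$ and $K_2$ produces a regular expression of $K_1 K_2$ of length $\overline{X}_{K_1} + \overline{X}_{K_2}$, so the optimum length for $K_1 K_2$ is at most this sum; the union case is analogous using $(R_1 + R_2)$. The string constraint $\overline{X}_s \leq |s|$ holds because $s$ written as a term is itself a regular expression of $\{s\}$ of length $|s|$, and nonnegativity is automatic. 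With this choice $\overline{X}_L$ equals the length of an optimal regular expression of $L$.

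I do not foresee a significant obstacle. The one conceptual point worth highlighting is why $\mathcal{P}_S$ closes the gap: the explicit union constraint is exactly what is needed to run the induction on $R = (R_1 + R_2)$ without appealing to string-level summation, and reciprocally, optimal regular expression lengths themselves assemble into a feasible $\overline{X}$ because concatenation and union of regular expressions add their lengths.
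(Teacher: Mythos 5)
Your proposal is correct and follows essentially the same two-part argument as the paper: an induction on $|R|$ using the restriction property for the upper bound, and the assignment $\overline{X}_K = |R_K|$ (length of an optimal regular expression of $K$) verified feasible via concatenation and union of optimal expressions for the matching lower bound. No meaningful differences; your explicit check of the constraint $\overline{X}_s \leq |s|$ is a small detail the paper leaves implicit.
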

    \begin{proof}
    	Let $\Sigma$ be an alphabet, let $R$ be a regular expression over $\Sigma$, let $L=L(R)$, and let $\overline{X}$ be any feasible solution of $\mathcal{P}_S(L)$. First we prove that $\overline{X}_L \leq |R|$. The proof is by induction on $|R|$. The case of $R=a\in\Sigma$ follows from the observation that $\overline{X}$ must satisfy $\overline{X}_{a}\leq 1$. So assume that $|R| > 1$. So there must exist regular expressions $R_1$ and $R_2$ for which either $R=R_1R_2$ or $R=(R_1+R_2)$. First suppose that $R=R_1R_2$. Let $L_1=L(R_1)$ and $L_2=L(R_2)$. Since $(L_1,L_2)\in \mathcal{C}_c(L)$ and $\overline{X}$ is feasible, it must hold that $\overline{X}_{L}\leq \overline{X}_{L_1} + \overline{X}_{L_2}$. Now $\overline{X}$ restricted to $\mathcal{C}(L_1)$ is a feasible solution for $\mathcal{P}_S(L_1)$, so from the induction hypothesis it must hold that $\overline{X}_{L_1}\leq |R_1|$. Similarly it must hold that $\overline{X}_{L_2}\leq |R_2|$. So we can write $\overline{X}_{L}\leq \overline{X}_{L_1}+\overline{X}_{L_2}\leq |R_1|+|R_2|=|R_1R_2|=|R|$. The case of $R=(R_1+R_2)$ can be handled similarly. Since this is true for all feasible solutions we proved that $\opt(\mathcal{P}_S(L))\leq |R|$.
    	
    	Now suppose $L$ is a language and $R$ is an optimal regular expression of $L$.  We show that $\opt(\mathcal{P}_S(L)) \geq |R|$.
    	For any $K\in \mathcal{C}(L)$ let $R_K$ denote an optimal regular expression of $K$. Define $\overline{X}$  by setting $\overline{X}_{K} = |R_K|$ for all $K\in\mathcal{C}(L)$. We show that $\overline{X}$ is a feasible solution of $\mathcal{P}_S(L)$. Suppose that $(K_1, K_2)\in\mathcal{C}_c(L)$. Since $R_{K_1K_2}$ is an optimal regular expression of $K_1K_2$, and $R_{K_1}R_{K_2}$ is a regular expression of $K_1K_2$, it must hold that $|R_{K_1K_2}|\leq |R_{K_1}R_{K_2}| = |R_{K_1}|+|R_{K_2}|$. Thus we have $\overline{X}_{K_1K_2}\leq \overline{X}_{K_1}+\overline{X}_{K_2}$. We can similarly show that $\overline{X}_{K_1+K_2}\leq \overline{X}_{K_1} + \overline{X}_{K_2}$ for all $(K_1,K_2)\in\mathcal{C}_u(L)$. This completes the proof that $\overline{X}$ is feasible. Now notice that the objective value of $\overline{X}$ is $\overline{X}_L=|R_L|=|R|$. Therefore it must hold that $\opt(\mathcal{P}_S(L)) \geq \overline{X}_L = |R|$.
    \end{proof}
    
    Example \ref{ex:caveat} shows that $\mathcal{P}(T(n,1))$ fails to prove any nontrivial lower bounds on $T(n,1)$. The next proposition proves a matching lower bound for $T(n,1)$ using $\mathcal{P}_S(T(n,1))$.
    We saw earlier that the Boolean function $f_{T(n,1)}$ is $\OR_n$, i.e., the disjunction of $n$ bits. The function value of $\OR_n$ is sensitive to the bits of the all zero input in the sense that flipping any one bit in $0^n$ flips the function value. Similarly the function value is sensitive on all inputs $s$ for which $|s|_1=1$ in that it is possible to change one bit and change the function value. The function value is more robust on all the other inputs (one must change at least two bits in order to flip the function value). In an intuitive level one can blame the complexity of a Boolean function on inputs for which the function value is most sensitive on the input bits. This intuitive idea is used to prove quantum query lower bounds for the Boolean function $f_{T(n,k)}$. We use this idea to prove a matching lower bound for the length of regular expressions of $T(n,1)$ in the next proposition.
    
    \begin{proposition}\label{prop:threshold} It holds that $$\opt(\mathcal{P}_S(T(n,1))) \geq n\ln(en).$$ Therefore the regular expression of $T(n,1)$ given in Example \ref{ex:caveat} is optimal up to a constant multiplicative factor.
    \end{proposition}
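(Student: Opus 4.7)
The plan is to construct a feasible solution $\overline{X}$ of $\mathcal{P}_S(T(n,1))$ whose objective value is exactly $n\ln(en)$; by Theorem \ref{thm:strong} this immediately gives the lower bound. Taking the cue from the sensitivity discussion preceding the proposition, I would put nontrivial weight only on weight-$0$ and weight-$1$ strings. Since $\mathcal{C}(T(n,1)) \subseteq \mathcal{C}(\Sigma^n)$, every closure element $K$ lies in a unique $\Sigma^{m_K}$ with $m_K \leq n$, so I let $a_K \in \{0,1\}$ indicate whether $0^{m_K} \in K$, let $b_K = |K \cap B(m_K,1)|$, and define
\[
\overline{X}_K = \begin{cases} b_K \ln(em_K) & \text{if } a_K = 0, \\ m_K + b_K & \text{if } a_K = 1. \end{cases}
\]
Since $0^n \notin T(n,1)$ while all $n$ weight-$1$ strings of length $n$ are in $T(n,1)$, this yields $\overline{X}_{T(n,1)} = n\ln(en)$, matching the target.

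The bulk of the work will be verifying feasibility. First, the singleton bounds $\overline{X}_s \leq |s|$ are routine: $\{0^m\}$ has value $m$, a weight-$1$ singleton has value $\ln(em) \leq m$, and heavier singletons have value $0$. Next, for the union constraints I would use $a_{K_1+K_2} \leq \min(1, a_{K_1}+a_{K_2})$ together with $b_{K_1+K_2} \leq b_{K_1}+b_{K_2}$ and split on the values of $(a_{K_1}, a_{K_2})$; the only case needing effort is $(1,0)$, which collapses to $\ln(em) \geq 1$.

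The main obstacle, and the reason for the somewhat unusual two-branch definition, will be the concatenation constraint. The key combinatorial identities are $a_{K_1K_2} = a_{K_1}a_{K_2}$ and $b_{K_1K_2} = a_{K_1}b_{K_2} + b_{K_1}a_{K_2}$, because a weight-$1$ factorization must be (weight-$0$)$\cdot$(weight-$1$) or vice versa. Case $(0,0)$ is trivial since $b_{K_1K_2} = 0$. Case $(1,0)$, and symmetrically $(0,1)$, will reduce after cancellation to $b_{K_2}\ln(1+m_1/m_2) \leq m_1$, which follows from $\ln(1+x) \leq x$ together with $b_{K_2} \leq m_2$, i.e., exactly the trick from Example \ref{ex:binomial}. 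The delicate case $(1,1)$ is where a naive extension of the Example \ref{ex:binomial} primal solution would break down; with the present definition both sides become $b_{K_1}+b_{K_2}+m_1+m_2$ and the constraint holds with equality. The entire point of replacing the coefficient $\ln(em_K)$ by $1$ when $a_K = 1$ is to force this equality, and this is the single subtle design choice in the construction.

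Combining, $\opt(\mathcal{P}_S(T(n,1))) \geq n\ln(en)$, and comparison with the regular expression of length $\lceil 2n\log(2n)\rceil$ from Example \ref{ex:caveat} gives a ratio tending to $2/\ln 2$, establishing the claimed constant-factor optimality.
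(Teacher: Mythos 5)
Your construction is correct and follows essentially the same route as the paper: assign each closure element $K$ a value depending only on whether it contains the all-zeros string and on how many weight-one strings it contains, then verify the union and concatenation constraints by the same case split on which of $K_1,K_2$ contain a string in $0^+$, reducing the mixed case to the inequality of Example \ref{ex:binomial}. The only difference is your extra $+b_K$ term in the $a_K=1$ branch; the paper simply sets $\overline{X}_K=m_K$ there, which also works (the $(1,1)$ concatenation case then reads $m_1+m_2\leq m_1+m_2$), so that term is a harmless but inessential embellishment rather than the crucial design choice you describe.
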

    \begin{proof}
    	We construct a feasible solution $\overline{X}$ for $\mathcal{P}_S(T(n,1))$ using the feasible solution $\overline{x}$ for $\mathcal{P}(B(n,1))$ given in Example \ref{ex:binomial}. Let $K\in\mathcal{C}(T(n,1))$ be a language. All strings in $K$ must have the same length. This follows from Proposition \ref{proposition:restriction} and the fact that for every language in $\mathcal{C}(\Sigma^n)$ all strings are of the same length. Suppose that the length of strings in $K$ is $n'$. If $0^{n'}\in K$ then set $\overline{X}_K = n'$. Otherwise set $$\overline{X}_K = \sum_{s\in B(n',1)\cap K}\overline{x}_s$$ where $\overline{x}$ is defined in Example \ref{ex:binomial}. In other words, in the latter case $\overline{X}_K = m\ln(en')$ where $m$ is the number of strings in $K$ that contain exactly one $1$. The objective value of $\overline{X}$ is thus $\overline{X}_{T(n,1)}=n\ln(en)$.
    	
    	Next we show that $\overline{X}$ is feasible. The positivity constraints are trivially satisfied, so are the boundedness constraint $\overline{X}_s \leq |s|$. 
    	
    	Let $(K_1,K_2)\in\mathcal{C}_c(T(n,1))$. We show that $\overline{X}_{K_1K_2}\leq \overline{X}_{K_1}+\overline{X}_{K_2}$. If both $K_1$ and $K_2$ contain strings in $0^+$ then so is $K_1K_2$ in which case the left hand side of the inequality is zero (and the inequality holds trivially). If neither $K_1$ nor $K_2$ contain strings in $0^+$, then $K_1K_2$ contains no string with exactly one $1$. Thus $X_{K_1K_2}=0$ and the inequality holds trivially again. Now suppose only one of $K_1$ or $K_2$ contains a string in $0^+$. Either way the inequality becomes identical to the inequality (\ref{eq:example:binomial:inequality}) that we proved in Example \ref{ex:binomial}.
    	
    	Now let $(K_1,K_2)\in\mathcal{C}_u(T(n,1))$. We show that $\overline{X}_{K_1+K_2}\leq \overline{X}_{K_1}+\overline{X}_{K_2}$. If any of $K_1$ or $K_2$ contains a string in $0^+$, then $\overline{X}_{K_1+K_2}=0$ and the inequality holds trivially. So suppose neither $K_1$ nor $K_2$ contain a string in $0^+$. Let $m_1,m_2$ be the number of strings in $K_1$ and $K_2$ respectively with exactly one $1$. The inequality then follows from the observation that the number of strings in $K_1+K_2$ with exactly one $1$ is at most $m_1+m_2$. 
    \end{proof}
    
    The strategy in this proof can be extended to obtain feasible solutions of $\mathcal{P}_S(T(n,k))$ from feasible solutions of $\mathcal{P}(B(n,k))$ with the same objective value for all $k$. Therefore lower bounds on $B(n,k)$ obtained from feasible solutions of $\mathcal{P}(B(n,k))$ are also lower bounds on $T(n,k)$.
    
    We end this section with a digression by noting a simple property that holds for the strong formulation but not necessarily for the weak formulation. Let $L\subseteq \Sigma^n$ be a language and define
    \begin{align*}
    		\overline{\mathcal{P}}_S(L)\quad\text{ maximize:}\quad &X_L\\
    		\text{subject to:}\quad & X_{K_1K_2}\leq X_{K_1}+X_{K_2}  \quad\quad\text{ for all } (K_1,K_2)\in \mathcal{C}_c(\Sigma^n)\\
    		& X_{K_1+K_2}\leq X_{K_1}+X_{K_2}  \quad\text{ for all } (K_1,K_2)\in \mathcal{C}_u(\Sigma^n),\\
    		&X_s\leq |s|\quad\text{ for all }s\in\mathcal{C}_0(\Sigma^n),\\
    		&X_K\geq 0\text{ for all } K\in\mathcal{C}(\Sigma^n).
    \end{align*}
    We do not give a proof here but it holds that $\opt(\overline{\mathcal{P}}_S(L))=\opt(\mathcal{P}_S(L))$. This is an interesting observation because only the objective function of $\overline{\mathcal{P}}_S(L)$ is dependent on the language $L$. To put differently the set of feasible solutions of $\overline{\mathcal{P}}_S(L_1)$ and $\overline{\mathcal{P}}_S(L_2)$ are the same for every pair of languages $L_1,L_2\subseteq \Sigma^n$. 
    
    We note that the analogue of this does not hold for the weak formulation.  More formally, for a language $L\subseteq \Sigma^n$ define $\overline{\mathcal{P}}(L)$ as follows
    \begin{align*}
    \overline{\mathcal{P}}(L)\quad\text{ maximize:}\quad &\sum_{s\in L}x_s\\
    \text{subject to:}\quad & \sum_{s\in K_1K_2}x_s \leq \sum_{s\in K_1}x_s + \sum_{s\in K_2}x_s  \quad\text{ for all } (K_1,K_2)\in \mathcal{C}_c(\Sigma^n),\\
    &0 \leq x_s \leq |s|\quad\text{ for all } s \in \mathcal{C}_0(\Sigma^n).
    \end{align*}
    From Example \ref{ex:all strings primal} we know that $ \opt(\overline{\mathcal{P}}(\Sigma^n))=\opt(\mathcal{P}(\Sigma^n))=n|\Sigma|$. This implies that for every language $L\subseteq \Sigma^n$ it must be that $\opt(\overline{\mathcal{P}}(L))\leq n|\Sigma|$. On the other hand we showed in Example \ref{ex:binomial} that $\opt(\mathcal{P}(B(n,1)))\geq n\ln(2n)$.
    
\section{Dual problems}\label{section:duals}
    In this section we compute the duals of $\mathcal{P}(L)$ and $\mathcal{P}_S(L)$ denoted by $\mathcal{D}(L)$ and $\mathcal{D}_S(L)$ respectively. We show that the dual problems have nice connections to regular expressions of $L$; every regular expression of $L$ can be turned into a feasible solution of $\mathcal{D}(L)$ (and also $\mathcal{D}_S(L)$). Moreover the objective value of the dual feasible solution obtained in this way coincides with the length of the regular expression.
    
    To compute the dual we first transform $\mathcal{P}(L)$ into the formal definition in the introduction.
    Let $L$ be a language. For a language $L$ its indicator vector $I_L\in \mathbb{R}^{\mathcal{C}_0(L)}$ is defined to be
    \begin{align*}
    I_L(s) = \begin{cases}1 &\quad\text{ for all }s\in L,\\
                          0 &\quad\text{ otherwise.}\end{cases}
    \end{align*}
   Denote $I_{\{s\}}$ simply by $I_s$ for all $s\in \mathcal{C}_0(L)$. Also define a vector $J_L = \sum_{s\in\mathcal{C}_0(L)} |s|I_s$.
    The primal problem $\mathcal{P}(L)$ can be rewritten as follows
    \begin{align*}
        \mathcal{P}(L)\quad\text{ maximize:}\quad &I_L^Tx\\
        \text{subject to:}\quad &(I_{K_1K_2} - I_{K_1} - I_{K_2})^Tx \leq 0  \quad\text{ for all } (K_1,K_2)\in \mathcal{C}_c(L),\\
        & I_s^T x \leq |s| \quad\text{ for all } s\in \mathcal{C}_0(L),\\
        &x \in {\real^+}^{\mathcal{C}_0(L)}.
    \end{align*}
    Now the dual is easily seen to be
    \begin{align*}
        \mathcal{D}(L)\quad\text{ minimize:}\quad &J_L^Tw\\
        \text{subject to:}\quad &\sum_{(K_1,K_2)\in\mathcal{C}_c(L)}y_{K_1,K_2}(I_{K_1K_2} - I_{K_1} - I_{K_2}) + w \geq I_L,\\
        &y_{K_1,K_2} \geq 0 \quad\text{ for all }(K_1,K_2)\in\mathcal{C}_c(L),\\
        &w \in {\real^+}^{\mathcal{C}_0(L)}.
    \end{align*}
    
    The dual problem is always feasible. One way to see this is to use the duality theorem. We can also see this directly by setting $w_s = 1$ for all $s\in L$ and setting all the other variables to zero. The objective value of this trivial feasible solution is $\sum_{s\in L}|s|$. Evidently this is the length of a trivial regular expression for $L$ in which every string in $L$ is repeated exactly once as a term. Just like the primal, the dual problem is bounded too. This follows from the observation that the objective value of any feasible solution is nonnegative.  We use pairs of vectors $(w,y)\in {\real^+}^{\mathcal{C}_0(L)} \times {\real^+}^{\mathcal{C}_c(L)}$ to refer to dual feasible solutions.
    
    \begin{example}\label{ex:simple dual}
    Let $L=\{00,000\}$ be the language we studied in Example \ref{ex:simple primal}. The dual problem $\mathcal{D}(L)$ for this language is obtained easily if we recall that
    \begin{align*}
    \mathcal{C}_0(L)&=\{0,00,000\},\\
    \mathcal{C}_c(L) &= \{(0,0),(0,00),(00,0),(0+00,0),(0,0+00)\}.
    \end{align*}
    The dual is therefore as follows
    \begin{align*}
    \quad\text{ minimize:}\quad & w_0 + 2w_{00}+3w_{000}\\
        \text{subject to:}\quad 
        &y_{0,0}(I_{00}-I_0-I_0)+(y_{0,00}+y_{00,0})(I_{000}-I_0-I_{00})\\&\quad+(y_{0,0+00}+y_{0+00,0})(I_{00+000}-I_{0}-I_{0+00}) + w \geq I_{00+000}\\
        &y_{0,0},y_{0,00},y_{00,0},y_{0,0+00},y_{0+00,0} \geq 0\\
        &w_0,w_{00},w_{000} \geq 0.
    \end{align*}
    Noting $I_{00+000}=I_{00}+I_{000}$ and $I_{0+00}=I_{0}+I_{00}$ we can simplify to obtain the following
    \begin{align*}
    \quad\text{ minimize:}\quad & w_0 + 2w_{00}+3w_{000}\\
        \text{subject to:}\quad 
        &(-2y_{0,0}-y_{0,00}-y_{00,0}-2y_{0,0+00}-2y_{0+00,0}+w_0)I_0\\
         &\quad+(y_{0,0}-y_{0,00}-y_{00,0}+w_{00}-1)I_{00}\\
         &\quad+(y_{0,00}+y_{00,0}+y_{0,0+00}+y_{0+00,0}+w_{000}-1)I_{000}\geq 0\\
        &y_{0,0},y_{0,00},y_{00,0},y_{0,0+00},y_{0+00,0} \geq 0\\
        &w_0,w_{00},w_{000} \geq 0.
    \end{align*}
    which is equivalent to
    \begin{align*}
        \quad\text{ minimize:}\quad & w_0 + 2w_{00}+3w_{000}\\
        \text{subject to:}\quad &-2y_{0,0}-y_{0,00}-y_{00,0}-2y_{0,0+00}-2y_{0+00,0}+w_0\geq 0,\\
        &y_{0,0}-y_{0,00}-y_{00,0}+w_{00}\geq 1,\\
        &y_{0,00}+y_{00,0}+y_{0,0+00}+y_{0+00,0}+w_{000} \geq 1\\
        &y_{0,0},y_{0,00},y_{00,0},y_{0,0+00},y_{0+00,0} \geq 0\\
        &w_0,w_{00},w_{000} \geq 0.
    \end{align*}
    We can easily verify that setting all variables to zero except for $w_0=2, w_{00}=1$ and $y_{0+00,0}=1$ satisfies all the constraints. We explain in a moment how this feasible solution is obtained from the regular expression $R=(0+00)0$ of $L$. For now notice how the objective value of our solution is $4$ which happens to be the same as $|R|$.   
    \end{example}
    We now explain a procedure that recursively converts regular expressions to dual feasible solutions. Let $L$ be a language and let $R$ be a regular expression of $L$. The procedure outputs a feasible solution $(\hat{w},\hat{y})$ of $\mathcal{D}(L)$ with an objective value of $|R|$. If $R$ is a single term $s$, set $\hat{w}_s=1$ and set all the other variables to zero. Now suppose $R$ is not a single term. So it must be that either $R=(R_1+R_2)$ or $R=R_1R_2$ for some choice of regular expressions $R_1$ and $R_2$. Let $(\overline{w},\overline{y})$ and $(\underline{w},\underline{y})$ be feasible solutions obtained recursively for $R_1$ and $R_2$ respectively. 
    
    First suppose that $R=(R_1+R_2)$. Define a pair of vectors $$(\hat{w},\hat{y})\in {\real^+}^{\mathcal{C}_0(L)} \times {\real^+}^{\mathcal{C}_c(L)}$$ as follows. Let $\hat{w}_s=\overline{w}_s+\underline{w}_s$ and $\hat{y}_{K_1,K_2}=\overline{y}_{K_1,K_2}+\underline{y}_{K_1,K_2}$ for all $s\in\mathcal{C}_0(L)$ and $(K_1,K_2)\in \mathcal{C}_c(L)$. In these expressions interpret $\overline{w}_s,\underline{w}_s,\overline{y}_{K_1,K_2},\underline{y}_{K_1,K_2}$ as zeroes when they do not exist. So for example if $s\in \mathcal{C}_0(L)$ is such that $s\notin\mathcal{C}_0(L(R_1))$ we follow the convention that $\overline{w}_s=0$. Similarly if $(K_1,K_2)\in\mathcal{C}_c(L)$ is such that $(K_1,K_2)\notin\mathcal{C}_c(L(R_1))$ we assume $\underline{y}_{K_1,K_2}=0$. 
    
    Next suppose that $R=R_1R_2$. There could be more than one choice of $R_1$ and $R_2$. Choose any pair for which either $R_1$ does not end in a term or $R_2$ does not begin with a term (different choices here might yield different feasible solutions, but the objective values remain the same). This is always possible as $R$ is not a single term. Similar to the previous case define a pair of vectors $(\hat{w},\hat{y})$ by setting $\hat{w}_s=\overline{w}_s+\underline{w}_s$ and $\hat{y}_{K_1,K_2}=\overline{y}_{K_1,K_2}+\underline{y}_{K_1,K_2}$ for all $s\in\mathcal{C}_0(L)$ and $(K_1,K_2)\in \mathcal{C}_c(L)$. As before in these expressions interpret $\overline{w}_s,\underline{w}_s,\overline{y}_{K_1,K_2},\underline{y}_{K_1,K_2}$ as zeroes when they do not exist. Additionally set $\hat{y}_{L(R_1),L(R_2)}=1$. 
    
    Observe that the solution obtained from this procedure is an integer solution. The dual feasible solution given in Example \ref{ex:simple dual} provides one example of this procedure. We give one more example after the following theorem that proves the correctness of this procedure. 
    
    \begin{theorem}\label{thm:dual}
    Let $L$ be a language and let $R$ be a regular expression of $L$. The procedure discussed above yields a feasible solution of $\mathcal{D}(L)$ with an objective value of $|R|$. 
    \end{theorem}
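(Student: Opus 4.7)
My plan is to prove the theorem by induction on $|R|$, showing simultaneously that $(\hat{w}, \hat{y})$ is feasible and that its objective value $J_L^T \hat{w}$ equals $|R|$. The base case is when $R$ is a single symbol $a \in \Sigma$, so $L = \{a\}$ and $\mathcal{C}_c(L) = \varnothing$; the only constraint is $\hat{w} \geq I_L$, and this is satisfied by $\hat{w}_a = 1$, giving objective value $|a| \cdot 1 = 1 = |R|$.

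For the inductive step, I would handle the two cases according to the outermost operation in $R$. In the union case $R = (R_1 + R_2)$ with $L_i = L(R_i)$, the recursive solutions $(\overline{w}, \overline{y})$ and $(\underline{w}, \underline{y})$ are feasible for $\mathcal{D}(L_1)$ and $\mathcal{D}(L_2)$ by the inductive hypothesis. Using Proposition \ref{proposition:restriction}, $\mathcal{C}_0(L_i) \subseteq \mathcal{C}_0(L)$ and $\mathcal{C}_c(L_i) \subseteq \mathcal{C}_c(L)$, so the zero-padding convention lets me view both recursive solutions as living in the same ambient space as $(\hat{w}, \hat{y})$. Adding the two feasibility inequalities componentwise yields
\[
\sum_{(K_1,K_2) \in \mathcal{C}_c(L)} \hat{y}_{K_1,K_2}(I_{K_1K_2} - I_{K_1} - I_{K_2}) + \hat{w} \;\geq\; I_{L_1} + I_{L_2} \;\geq\; I_{L_1 \cup L_2} \;=\; I_L,
\]
where the middle inequality holds pointwise because $L = L_1 \cup L_2$. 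The objective is $J_L^T \hat{w} = J_{L_1}^T \overline{w} + J_{L_2}^T \underline{w} = |R_1| + |R_2| = |(R_1 + R_2)| = |R|$.

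In the concatenation case $R = R_1 R_2$, the crucial additional ingredient is that $(L_1, L_2) \in \mathcal{C}_c(L)$, so the single extra variable $\hat{y}_{L_1, L_2} = 1$ is legitimate. Adding the two inductive feasibility inequalities and the contribution of this new term gives
\[
\sum_{(K_1,K_2) \in \mathcal{C}_c(L)} \hat{y}_{K_1,K_2}(I_{K_1K_2} - I_{K_1} - I_{K_2}) + \hat{w} \;\geq\; I_{L_1} + I_{L_2} + \bigl(I_{L_1 L_2} - I_{L_1} - I_{L_2}\bigr) \;=\; I_L.
\]
Since the objective functional depends only on $w$, the added $y$ entry does not alter the objective, so $J_L^T \hat{w} = |R_1| + |R_2| = |R_1 R_2| = |R|$.

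The main obstacle I anticipate is bookkeeping rather than a conceptual difficulty: one must justify that all the recursive $\overline{y}$ and $\underline{y}$ terms, once extended by zero, produce valid indices in $\mathcal{C}_c(L)$, that the sums indexed over $\mathcal{C}_c(L_1)$ and $\mathcal{C}_c(L_2)$ can safely be re-indexed over $\mathcal{C}_c(L)$, and that the choice of decomposition $R = R_1 R_2$ (with $R_1$ not ending in a term or $R_2$ not beginning with one) does not matter for the argument --- only its existence is used. All of this follows from Proposition \ref{proposition:restriction} and a careful reading of the definitions, but it is the only place where the proof needs to be written slowly.
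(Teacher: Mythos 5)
Your overall strategy --- induction on $|R|$, summing the two recursive feasibility inequalities, and in the concatenation case adding one extra unit at $\hat{y}_{L_1,L_2}$ --- is exactly the paper's. Your objective-value computation is in fact a little cleaner than the paper's: the paper first proves that $\hat{w}_s$ equals the number of occurrences of $s$ as a term of $R$ (which is the only place the careful choice of the split $R=R_1R_2$ is really used), whereas you get $J_L^T\hat{w}=J^T\overline{w}+J^T\underline{w}=|R_1|+|R_2|$ directly from the additivity $\hat{w}=\overline{w}+\underline{w}$ and the induction hypothesis, bypassing the frequency claim entirely.

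Two points need repair before the argument is complete. First, the base case of the procedure is ``$R$ is a single term $s$,'' not ``a single symbol'': the procedure does not recurse inside terms, so for $R=00$ it simply outputs $\hat{w}_{00}=1$ and no $\hat{y}$, and neither your base case nor your concatenation step covers this. The verification is immediate ($\hat{w}=I_s=I_L$ dominates $I_L$, and the objective is $|s|=|R|$), but your case analysis as written misses it. Second, and more substantively, in the concatenation case you ``add the contribution of this new term'' to the sum of the two recursive inequalities; this tacitly uses $\hat{y}_{L_1,L_2}=\overline{y}_{L_1,L_2}+\underline{y}_{L_1,L_2}+1$, whereas the procedure sets $\hat{y}_{L_1,L_2}=1$. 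For these to coincide you must check that $\overline{y}_{L_1,L_2}=\underline{y}_{L_1,L_2}=0$, i.e.\ that the pair $(L_1,L_2)$ cannot already carry weight in either recursive solution. The paper flags this explicitly; it holds because $L_1L_2$ contains a string of length $\max_{s\in L_1}|s|+\max_{t\in L_2}|t|$, which is longer than every string occurring in any language of $\mathcal{C}(L_1)$ or $\mathcal{C}(L_2)$, so $L_1L_2\notin\mathcal{C}(L_1)\cup\mathcal{C}(L_2)$ and hence $(L_1,L_2)\notin\mathcal{C}_c(L_1)\cup\mathcal{C}_c(L_2)$. With these two additions your proof matches the paper's.
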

    \begin{proof}
    First we prove that $\hat{w}_s$ is the number of times $s$ appears as a term in $R$ (i.e., the frequency of $s$ in $R$). We proceed by an induction on $|R|$. The case where $R$ is a single term is trivial (so the induction basis follows trivially as well). Suppose $R=(R_1+R_2)$. By induction hypothesis $\overline{w}_s$ and $\underline{w}_s$ are frequencies of $s$ in $R_1$ and $R_2$ respectively. Therefore $\hat{w}_s = \overline{w}_s+\underline{w}_s$ is the frequency of $s$ in $R$. Now suppose $R=R_1R_2$. We chose $R_1$ and $R_2$ so that terms in $R_1R_2$ are either terms in $R_1$ or terms in $R_2$. So the proof again follows from the induction hypothesis. 
    
    The length $|R|$ is the sum of lengths of terms of $R$. So from our claim above the objective value of $(\hat{w},\hat{y})$ is $J_L^T\hat{w}=\sum_{s\in \mathcal{C}_0(L)} |s|\hat{w}_s = |R|$.
    
    Next we show feasibility. Note that the positivity constraints of $\mathcal{D}(L)$ are trivially satisfied from the construction. We prove that all the other constraints are satisfied as well. The proof is again by an induction on $|R|$. Suppose $R$ is a single term $s$, i.e., $L=\{s\}$. Then $\hat{w}_s = 1$ and all the other variables are zero. So it holds that $\hat{w}= I_s = I_L$. 
    
    Now assume $R=(R_1+R_2)$. It holds that
    \begin{align*}
    \sum_{(K_1,K_2)\in\mathcal{C}_c(L)}\hat{y}_{K_1,K_2}(I_{K_1K_2} - I_{K_1} - I_{K_2}) + \hat{w}&= \sum_{(K_1,K_2)\in\mathcal{C}_c(L)}(\overline{y}_{K_1,K_2}+\underline{y}_{K_1,K_2})(I_{K_1K_2} - I_{K_1} - I_{K_2}) + \overline{w}+\underline{w}\\
    &\geq I_{L(R_1)} + I_{L(R_2)}\\
    &\geq I_{L(R)}.
    \end{align*}
    The first inequality follows from the induction hypothesis. The second inequality follows from $L(R)=L(R_1)+L(R_2)$.
    
    Next suppose $R=R_1R_2$. We can ensure that $R_1$ and $R_2$ are the same expressions as the ones the procedure chooses when the procedure is applied to $R$. Let $L_1=L(R_1)$ and $L_2=L(R_2)$. It is evident that due to our strict definition of regular expressions (not allowing empty language nor empty string), the expression $R_1R_2$ cannot be a subexpression of $R_1$ or $R_2$. So by construction $\overline{y}_{L_1,L_2}=\underline{y}_{L_1,L_2}=0$. So we can write
    \begin{align*}
    &\sum_{(K_1,K_2)\in\mathcal{C}_c(L)}\hat{y}_{K_1,K_2}(I_{K_1K_2} - I_{K_1} - I_{K_2}) + \hat{w}\\
    &\quad\quad=\sum_{\substack{(K_1,K_2)\in\mathcal{C}_c(L)\\(K_1,K_2)\neq (L_1,L_2)}}\hat{y}_{K_1,K_2}(I_{K_1K_2} - I_{K_1} - I_{K_2}) + \hat{y}_{L_1,L_2}(I_{L_1L_2} - I_{L_1} - I_{L_2}) + \hat{w}\\
    &\quad\quad= \sum_{(K_1,K_2)\in\mathcal{C}_c(L)}(\overline{y}_{K_1,K_2}+\underline{y}_{K_1,K_2})(I_{K_1K_2} - I_{K_1} - I_{K_2}) + I_{L_1L_2} - I_{L_1} - I_{L_2} + \overline{w}+\underline{w}\\
    &\quad\quad\geq I_{L_1} + I_{L_2} + I_{L_1L_2} - I_{L_1} - I_{L_2}\\
    &\quad\quad= I_{L_1L_2}.
    \end{align*}
    For the second equality we used the fact that $\hat{y}_{L_1,L_2}=1$ from the construction. The inequality follows from induction hypothesis.
    \end{proof}
    
    A simple corollary of this theorem is that $\opt(\mathcal{D}(L))\leq |R|$ for any regular expression $R$ of $L$. So together with the duality theorem of linear programming we obtained a second proof of Theorem \ref{thm:lowerbound}.
    
    \begin{example}\label{ex:all strings dual}
    Let $\Sigma$ be the binary alphabet, let $L=\Sigma^n$, and let $R_n=(0+1)^n$ be a regular expression of $L$. We discussed a primal feasible solution for this language in Example \ref{ex:all strings primal}. Here we provide a dual feasible solution. We do so by applying the procedure. Since $0$ and $1$ are the only terms appearing in $R_n$ the procedure yields $\hat{w}_0=\hat{w}_1=n$. Now suppose $n \geq 2$ and note the recursive relation $R_{n}=R_{n-1}R_1$. With this recursive relation in mind, the procedure yields $\hat{y}_{\Sigma^{k},\Sigma}=1$ for all $1\leq k\leq n-1$. The procedure yields zero for all other variables. 

    Let us directly verify that $(\hat{w},\hat{y})$ is a feasible solution. We must show that the following inequality holds
    \begin{align*}
    \sum_{(K_1,K_2)\in\mathcal{C}_c(\Sigma^n)}\hat{y}_{K_1,K_2}(I_{K_1K_2} - I_{K_1} - I_{K_2}) + \hat{w} \geq I_{\Sigma^n}.
    \end{align*}
    This follows from the following two observations. First it holds that
    \begin{align*}
    &\sum_{(K_1,K_2)\in\mathcal{C}_c(\Sigma^n)}\hat{y}_{K_1,K_2}(I_{K_1K_2} - I_{K_1} - I_{K_2})=
    \sum_{k=1}^{n-1}(I_{\Sigma^{k+1}} - I_{\Sigma^k} - I_{\Sigma})\\
    &= I_{\Sigma^{n}}-nI_{\Sigma}.
    \end{align*}
    Secondly we observe that $\hat{w}=n I_{\Sigma}$. 
    \end{example}
    
    We end this section by giving the dual of the strong formulation.   For $K\in\mathcal{C}(L)$ the vector $E_K\in\real^{\mathcal{C}(L)}$ is defined by
    \begin{align*}
            E_K(K')=\begin{cases} 
                          1 & K'=K, \\
                          0 & \text{otherwise},\\
                       \end{cases}
    \end{align*}
    for all $K'\in\mathcal{C}(L)$. Denote $E_{\{s\}}$ simply by $E_s$. The dual problem $\mathcal{D}_S(L)$ is as follows
    \begin{align*}
        \mathcal{D}_S(L)\quad\text{ minimize:}\quad &\sum_{s\in\mathcal{C}_0(L)} |s|W_s\\
        \text{subject to:}\quad &\sum_{(K_1,K_2)\in\mathcal{C}_c(L)}Y_{K_1,K_2}(E_{K_1K_2}-E_{K_1}-E_{K_2})\\ &\quad\quad+ \sum_{(K_1,K_2)\in\mathcal{C}_u(L)}Z_{K_1,K_2}(E_{K_1+K_2}-E_{K_1}-E_{K_2})+ \sum_{s\in\mathcal{C}_0(L)}W_sE_s \geq E_L,\\
        &Y_{K_1,K_2} \geq 0 \text{ for all }(K_1,K_2)\in\mathcal{C}_c(L)\\
        &Z_{K_1,K_2} \geq 0 \text{ for all }(K_1,K_2)\in\mathcal{C}_u(L)\\
        &W_s \geq 0 \text{ for all }{s\in \mathcal{C}_0(L)}.
    \end{align*}
    Note that $Y_{K_1,K_2},Z_{K_1,K_2},W_{s}$ are all real valued variables.
    
    The analogue of Theorem \ref{thm:dual} exists for the strong formulation in a straightforward manner. In particular the procedure for obtaining dual feasible solutions is very similar to the case of weak formulation. 

\section{Application of linear programming duality to lower bounds}\label{section:duality}
    A regular expression $R_{n,k}$ of $B(n,k)$ is defined in Ellul et al. \cite{ellul-regular-expressions} by means of a recursive relation as follows. Let $R_{n,0}=0^n$. For $k > \frac{n}{2}$ let $R_{n,k}=\tilde{R}_{n,n-k}$ where $\tilde{R}$ is the regular expression obtained from $R$ by flipping zeroes and ones. Finally when $k \leq \frac{n}{2}$, let
    \begin{align*}R_{n,k} = \Big(R_{\lfloor\frac{n}{2}\rfloor,0}R_{\lceil\frac{n}{2}\rceil,k} + R_{\lfloor\frac{n}{2}\rfloor,1}R_{\lceil\frac{n}{2}\rceil,k-1}+\cdots+R_{\lfloor\frac{n}{2}\rfloor,k}R_{\lceil\frac{n}{2}\rceil,0}\Big).
    \end{align*}
    It is also stated in \cite{ellul-regular-expressions} that $|R_{n,k}|=O(n\log^k(n))$ where the hidden constant depends on $k$. It is also mentioned there that $R_{n,1}$ is optimal up to a constant multiplicative factor, a proof of which is given in this note in Example \ref{ex:binomial}. In this section we take the first major step in proving that $R_{n,k}$ is asymptotically optimal for all values of $k$. 

    Consider the binomial language $B(n,k)$ and the associated primal problem
    \begin{align*}
        \mathcal{P}(B(n,k))\quad\text{ maximize:}\quad &\sum_{s\in B(n,k)}x_s\\
        \text{subject to:}\quad & \sum_{s\in K_1K_2}x_s \leq \sum_{s\in K_1}x_s + \sum_{s\in K_2}x_s  \quad\text{ for all } (K_1,K_2)\in \mathcal{C}_c(B(n,k)),\\
        &0 \leq x_s \leq |s|\quad\text{ for all } s \in \mathcal{C}_0(B(n,k)).
    \end{align*}
    The number of variables in the primal is $|\mathcal{C}_0(B(n,k))|=\sum_{l=0}^{k}\sum_{m=l}^n \binom{m}{l}$ which is a polynomial in $n$. On the other hand the number of constraints is exponential in $n$ from the observation that $|\mathcal{C}(B(n,k))|=\sum_{l=0}^k\sum_{m=l}^n 2^{\binom{m}{l}}$.
    
    A strategy for proving lower bounds on $B(n,k)$ is to guess an $x$ then check if it is feasible in $\mathcal{P}(B(n,k))$. However this is not an easy task as seen from the number of constraints that must be verified. In this section we see a relaxation $\mathcal{P}'(n,k)$ to $\mathcal{P}(B(n,k))$ with only polynomially many constraints. Then using duality theorem we show that $\opt(\mathcal{P}'(n,k))\leq |R|$ where $R$ is any regular expression of $B(n,k)$. Therefore any feasible solution to this tractable relaxation proves a lower bound on $B(n,k)$.
    
    Let $\mathcal{P}'(n,k)$ be defined as follows
    \begin{align*}
        \mathcal{P}'(n,k)\quad\text{ maximize:}\quad &\sum_{s\in B(n,k)}x_s\\
        \text{subject to:}\quad & \sum_{s\in B(n_1,k_2)B(n_2,k_2)}x_s \leq \sum_{s\in B(n_1,k_1)}x_s + \sum_{s\in B(n_2,k_2)}x_s  \quad\text{ for all } (n_1,k_1,n_2,k_2)\in \mathcal{C}_c(n,k),\\
        &0 \leq x_s \leq |s|\quad\text{ for all } s \in \mathcal{C}_0(B(n,k)).
    \end{align*}
    This is clearly a relaxation of $\mathcal{P}(B(n,k))$ because the constraints $$\sum_{s\in B(n_1,k_2)B(n_2,k_2)}x_s \leq \sum_{s\in B(n_1,k_1)}x_s + \sum_{s\in B(n_2,k_2)}x_s$$ all appear in $\mathcal{P}(B(n,k))$. Therefore it holds that $\opt(\mathcal{P}(B(n,k)))\leq\opt(\mathcal{P}'(n,k))$.
    
    There is a connection between this relaxation and the class of \emph{complete regular expressions} which we define next. Let $R$ be a regular expression for $B(n,k)$ and let $R_1R_2$ be an arbitrary subexpression of $R$. Due to the special structure of $\mathcal{C}_c(B(n,k))$ it must hold that $L(R_1)$ and $L(R_2)$ are subsets of some binomial languages. Inspired by this observation a regular expression for $B(n,k)$ is a \emph{complete regular expression} if for every subexpression $R_1R_2$ it holds that both $L(R_1)$ and $L(R_2)$ are binomial languages, i.e., $L(R_1)=B(n_1,k_1), L(R_2)=B(n_2,k_2)$ for some values of $n_1,k_1,n_2,k_2$. For example the regular expression $R_{n,k}$ is a complete regular expression for $B(n,k)$. Now similar to the proof of Theorem \ref{thm:lowerbound}, we can easily argue that $\opt(\mathcal{P}'(n,k))$ is a lower bound on the length of any complete regular expression of $B(n,k)$. We will see however that a stronger statement holds. The corollary to our next theorem states that $\opt(\mathcal{P}'(n,k))$ is a lower bound on the length of any regular expression of $B(n,k)$ and not just complete regular expressions. 
    
    To proceed we must first observe that the dual of $\mathcal{P}'(n,k)$ is as follows
    \begin{align*}
        \mathcal{D}'(n,k)\quad\text{ minimize:}\quad &J_{B(n,k)}^Tw\\
        \text{subject to:}\quad &\sum_{(n_1,k_1,n_2,k_2)\in\mathcal{C}_c(n,k)}y_{n_1,k_1,n_2,k_2}(I_{B(n_1,k_1)B(n_2,k_2)} - I_{B(n_1,k_1)} - I_{B(n_2,k_2)}) + w \geq I_{B_{n,k}},\\
        &y_{n_1,k_1,n_2,k_2} \geq 0 \quad\text{ for all }(n_1,k_1,n_2,k_2)\in\mathcal{C}_c(n,k),\\
        &w \in {\real^+}^{\mathcal{C}_0(B(n,k))}.
    \end{align*}
    Note that $(w,y)\in {\real^+}^{\mathcal{C}_0(B(n,k))} \times {\real^+}^{\mathcal{C}_c(n,k)}.$
    We are now ready to state and prove the main theorem of this note. The proof is very similar to the proof of Theorem \ref{thm:dual}. 
    \begin{theorem}\label{thm:main}
    Let $B(n,k)$ be a binomial language. There exit a procedure that turns any regular expression of $B(n,k)$ into a feasible solution of $\mathcal{D}'(n,k)$ with an objective value that is equal to the length of the regular expression.  
    \end{theorem}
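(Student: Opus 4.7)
The plan is to mimic the proof of Theorem~\ref{thm:dual}, using the type-indexed dual variables of $\mathcal{D}'(n,k)$ in place of the language-pair-indexed variables of $\mathcal{D}(L)$. Given a regular expression $R$ of $B(n,k)$, I would recursively build a pair $(\hat w,\hat y)\in{\real^+}^{\mathcal{C}_0(B(n,k))}\times{\real^+}^{\mathcal{C}_c(n,k)}$ as follows. If $R$ is a single term $s$, set $\hat w_s=1$ and every other variable to zero. If $R=(R_1+R_2)$, recurse on $R_1$ and $R_2$ and take componentwise sums of the outputs. If $R=R_1R_2$, recurse, take componentwise sums, and additionally increment $\hat y_{n_1,k_1,n_2,k_2}$ by $1$, where $(n_i,k_i)$ is the unique pair satisfying $L(R_i)\subseteq B(n_i,k_i)$ (namely $n_i$ is the common length and $k_i$ the common weight of the strings in $L(R_i)$, which is well-defined since $L(R_i)\in\mathcal{C}(B(n,k))$). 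The caveat from Theorem~\ref{thm:dual} on choosing the decomposition $R=R_1R_2$ carries over verbatim.

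The objective-value computation is a direct transcription of the argument from Theorem~\ref{thm:dual}: by induction on the tree, $\hat w_s$ equals the number of occurrences of $s$ as a term in $R$, so $J_{B(n,k)}^T\hat w=\sum_{s}|s|\,\hat w_s$ sums the lengths of all term occurrences in $R$, which is exactly $|R|$.

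Feasibility is the main obstacle. I would prove
$$\sum_{(n_1,k_1,n_2,k_2)\in\mathcal{C}_c(n,k)}\hat y_{n_1,k_1,n_2,k_2}\bigl(I_{B(n_1,k_1)B(n_2,k_2)}-I_{B(n_1,k_1)}-I_{B(n_2,k_2)}\bigr)+\hat w\geq I_{L(R)}$$
by induction on $|R|$, in parallel to the argument in Theorem~\ref{thm:dual}. The single-term base case and the union step are routine. The concatenation step is the delicate point that distinguishes this proof from that of Theorem~\ref{thm:dual}: the contribution of the new type-indexed variable subtracts the full indicators $I_{B(n_i,k_i)}$ rather than the smaller indicators $I_{L(R_i)}$ that appeared in the earlier proof, so the positive mass $I_{L(R_i)}$ supplied by the inductive hypothesis on $R_i$ does not directly cancel these subtractions. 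I would address this by strengthening the inductive statement so that the partial left-hand side dominates $I_{B(n_i,k_i)}$ on all of the enclosing binomial, not merely on $L(R_i)$, exploiting that every subexpression of a regex of $B(n,k)$ sits inside a full binomial sub-language and that the admissible decompositions in $\mathcal{C}_c(n,k)$ respect this binomial structure. This is the place where a combinatorial argument specific to binomial languages, beyond the generic machinery of Theorem~\ref{thm:dual}, will be needed.

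Once feasibility is established, the corollary $\opt(\mathcal{P}'(n,k))\leq|R|$ for every regular expression $R$ of $B(n,k)$ follows by weak duality between $\mathcal{P}'(n,k)$ and $\mathcal{D}'(n,k)$, extending the lower-bound applicability from complete regular expressions to arbitrary ones.
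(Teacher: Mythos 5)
You have located the right obstacle---the dual constraint of $\mathcal{D}'(n,k)$ subtracts the full indicators $I_{B(n_i,k_i)}$ while the induction only supplies mass on $L(R_i)$---but the fix you propose cannot work. If the partial left-hand side for $R_i$ had to dominate the \emph{entire} indicator $I_{B(n_i,k_i)}$, then (since the $\hat y$-terms contribute nothing positive on strings of length $n_i$ that lie in no product $B(\cdot)B(\cdot)$ of the relevant type, and in any case the total positive mass is controlled by $\hat w$) the objective value of the solution built from $R_i$ would be forced up to roughly $n_i|B(n_i,k_i)|$, destroying the claim that the objective equals $|R|$. The same tension appears already in the base case: with $\hat w_s=1$ for a single term $s\in B(m,l)$ you get $I_s$, which does not dominate $I_{B(m,l)}$ when $\binom{m}{l}>1$, and no strengthening of the induction can conjure the missing mass for free.

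The paper resolves this with a genuinely fractional construction rather than the integral one you transcribe from Theorem~\ref{thm:dual}. For a single term $t\in B(m,l)$ it sets $\hat w_s=\frac{1}{|B(m,l)|}$ for \emph{every} $s\in B(m,l)$ (total mass $1$, so the objective contribution is still $m$), and for a concatenation it sets $\hat y_{m_1,l_1,m_2,l_2}=\frac{|L(R_1)||L(R_2)|}{|B(m_1+m_2,l_1+l_2)|}$ instead of $1$. The inductive invariant is then the \emph{fractional} domination
\begin{align*}
\sum_{(n_1,k_1,n_2,k_2)\in\mathcal{C}_c(n,k)}\hat y_{n_1,k_1,n_2,k_2}\bigl(I_{B(n_1,k_1)B(n_2,k_2)}-I_{B(n_1,k_1)}-I_{B(n_2,k_2)}\bigr)+\hat w\;\geq\;\frac{|L(R)|}{|B(m,l)|}\,I_{B(m,l)},
\end{align*}
which specializes to $\geq I_{B(n,k)}$ at the root because there $L(R)=B(n,k)$. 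The concatenation step closes because $\binom{m_1+m_2}{l_1+l_2}\geq\binom{m_1}{l_1}\binom{m_2}{l_2}$, which gives $\frac{|L_i|}{|B(m_i,l_i)|}\geq\frac{|L_1||L_2|}{|B(m_1+m_2,l_1+l_2)|}$ and lets the smeared positive mass from the children pay for the subtracted full indicators. This uniform-smearing idea, together with that binomial inequality, is the missing ingredient in your proposal; without it the induction does not go through.
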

    \begin{proof}
    By the definition of closure,if $R$ is subexpressions of any regular expression of $B(n,k)$, then $L(R)\in \mathcal{C}(B(n,k))$. First we give a more general procedure that converts any regular expression $R$ for which $L(R)\in \mathcal{C}(B(n,k))$ to a pair of vectors in ${\real^+}^{\mathcal{C}_0(B(n,k))} \times {\real^+}^{\mathcal{C}_c(n,k)}$.
    
    Let $R$ be a regular expression for a language in $\mathcal{C}(B(n,k))$. On input $R$, the procedure outputs a pair of vectors $(\hat{w},\hat{y})\in {\real^+}^{\mathcal{C}_0(B(n,k))} \times {\real^+}^{\mathcal{C}_c(n,k)}$ (not necessarily feasible) as follows. First suppose $R$ is a single term $t$ and $t\in B(m,l)$ for some $m,l$. Set $\hat{w}_s=\frac{1}{|B(m,l)|}$ for all $s\in B(m,l)$ and set all the other variables to zero. 
    
    Now suppose $R$ is not a single term. So it must be that either $R=(R_1+R_2)$ or $R=R_1R_2$ for some choice of regular expressions $R_1$ and $R_2$. Let $(\overline{w},\overline{y})$ and $(\underline{w},\underline{y})$ be the outputs of the procedure on $R_1$ and $R_2$ respectively. 
    
    First suppose that $R=(R_1+R_2)$. Define a pair of vectors $(\hat{w},\hat{y})$ as follows. Let $\hat{w}_s=\overline{w}_s+\underline{w}_s$ and $\hat{y}_{n_1,k_1,n_2,k_2}=\overline{y}_{n_1,k_1,n_2,k_2}+\underline{y}_{n_1,k_1,n_2,k_2}$ for all $s\in\mathcal{C}_0(B(n,k))$ and $(n_1,k_1,n_2,k_2)\in \mathcal{C}_c(n,k)$. 
    
    Next suppose that $R=R_1R_2$. There could be more than one choice of $R_1$ and $R_2$. Choose any pair for which either $R_1$ does not end in a term or $R_2$ does not begin with a term. There must exists $(m_1,l_1,m_2,l_2)\in \mathcal{C}_c(n,k)$ such that $L(R_1)\subseteq B(m_1,l_1)$ and $L(R_2)\subseteq B(m_2,l_2)$. Consequently $L(R)\subseteq B(m_1+m_2,l_1+l_2)$. Similar to the previous case define a pair of vectors $(\hat{w},\hat{y})$ by setting $\hat{w}_s=\overline{w}_s+\underline{w}_s$ and $\hat{y}_{n_1,k_1,n_2,k_2}=\overline{y}_{n_1,k_1,n_2,k_2}+\underline{y}_{n_1,k_1,n_2,k_2}$ for all $s\in\mathcal{C}_0(B(n,k))$ and $(n_1,k_1,n_2,k_2)\in \mathcal{C}_c(n,k)$. Additionally set $$\hat{y}_{m_1,l_1,m_2,l_2}=\frac{|L(R_1)||L(R_2)|}{|B(m_1+m_2,l_1+l_2)|}=\frac{|L(R)|}{|B(m_1+m_2,l_1+l_2)|}.$$ 
    
    First we prove that the objective value is $$J_{B(n,k)}^T\hat{w}=\sum_{s\in \mathcal{C}_0(B(n,k))}|s|\hat{w}_s = |R|.$$ We proceed by an induction on $|R|$. Suppose $R$ is a single term $t$ and $t\in B(m,l)$ for some $m,l$. From the construction it holds that $\hat{w}_s = \frac{1}{|B(m,l)|}$ for all $s\in B(m,l)$ and it is zero everywhere else. Thus $\sum_{s\in \mathcal{C}_0(B(n,k))}|s|\hat{w}_s=m=|R|$. Now suppose $R=(R_1+R_2)$. By the induction hypothesis $\sum_{s\in \mathcal{C}_0(B(n,k))}|s|\overline{w}_s = |R_1|$ and $\sum_{s\in \mathcal{C}_0(B(n,k))}|s|\underline{w}_s = |R_2|$. Now since $\hat{w}_s = \overline{w}_s+\underline{w}_s$ it holds that $\sum_{s\in \mathcal{C}_0(B(n,k))}|s|\hat{w}_s = |R_1|+|R_2|=|R_1+R_2|$. The case of $R=R_1R_2$ is similar. 
    
    Now suppose $L(R)\subseteq B(m,l)$ for some $m,l$. We prove the following $$\sum_{(n_1,k_1,n_2,k_2)\in\mathcal{C}_c(n,k)}\hat{y}_{n_1,k_1,n_2,k_2}(I_{B(n_1,k_1)B(n_2,k_2)} - I_{B(n_1,k_1)} - I_{B(n_2,k_2)}) + \hat{w} \geq \frac{|L(R)|}{|B(m,l)|}I_{B(m,l)},$$
    by an induction on $|R|$.
    
    Suppose $R$ is a single term $s$. Then from the construction it holds that
    \begin{align*}
    \sum_{(n_1,k_1,n_2,k_2)\in\mathcal{C}_c(n,k)}\hat{y}_{n_1,k_1,n_2,k_2}(I_{B(n_1,k_1)B(n_2,k_2)} - I_{B(n_1,k_1)} - I_{B(n_2,k_2)}) + \hat{w} &= \hat{w}\\
    &= \frac{1}{|B(m,l)|}I_{B(m,l)}.
    \end{align*}
    
    Now suppose $R=(R_1+R_2)$. It necessarily must hold that $R_1,R_2\subseteq B(m,l)$. Now we can write
    \begin{align*}
    &\sum_{(n_1,k_1,n_2,k_2)\in\mathcal{C}_c(n,k)}\hat{y}_{n_1,k_1,n_2,k_2}(I_{B(n_1,k_1)B(n_2,k_2)} - I_{B(n_1,k_1)} - I_{B(n_2,k_2)}) + \hat{w}\\&\quad\quad=\sum_{(n_1,k_1,n_2,k_2)\in\mathcal{C}_c(n,k)}(\overline{y}_{n_1,k_1,n_2,k_2}+\underline{y}_{n_1,k_1,n_2,k_2})(I_{B(n_1,k_1)B(n_2,k_2)} - I_{B(n_1,k_1)} - I_{B(n_2,k_2)}) + \overline{w}+\underline{w}\\
    &\quad\quad\geq \frac{|L(R_1)|}{|B(m,l)|}I_{B(m,l)} + \frac{|L(R_2)|}{|B(m,l)|}I_{B(m,l)}\\
    &\quad\quad\geq \frac{|L(R)|}{|B(m,l)|} I_{B(m,l)}.
    \end{align*}
    The first inequality follows from the induction hypothesis. The second inequality follows from $|L(R_1)|+|L(R_2)|\geq |L(R)|$
    
    Next suppose that $R=R_1R_2$. We can ensure that $R_1$ and $R_2$ are the same expressions as the ones the procedure chooses when the procedure is applied to $R$. Let $L=L(R), L_1=L(R_1)$ and $L_2=L(R_2)$. There must exist $m_1,l_1,m_2,l_2$ such that $L_1\subseteq B(m_1,l_1), L_2\subseteq B(m_2,l_2)$ and $L\subseteq B(m_1+m_2,l_1+l_2)$ (i.e, $m$ and $l$ from before are $m_1+m_2$ and $l_1+l_2$ respectively). Now we write
    \begin{align}
    &\sum_{(n_1,k_1,n_2,k_2)\in\mathcal{C}_c(n,k)}\hat{y}_{n_1,k_1,n_2,k_2}(I_{B(n_1,k_1)B(n_2,k_2)} - I_{B(n_1,k_1)} - I_{B(n_2,k_2)}) + \hat{w}\nonumber\\
    &\quad\quad=\sum_{\substack{(n_1,k_1,n_2,k_2)\in\mathcal{C}_c(n,k)\nonumber\\(n_1,k_1,n_2,k_2)\neq (m_1,l_1,m_2,l_2)}}\hat{y}_{n_1,k_1,n_2,k_2}(I_{B(n_1,k_1)B(n_2,k_2)} - I_{B(n_1,k_1)} - I_{B(n_2,k_2)})\nonumber\\&\quad\quad\quad\quad\quad\quad\quad\quad\quad + \hat{y}_{m_1,l_1,m_2,l_2}(I_{B(m_1,l_1)B(m_2,l_2)} - I_{B(m_1,l_1)} - I_{B(m_2,l_2)}) + \hat{w}\nonumber\\
    &\quad\quad=\sum_{(n_1,k_1,n_2,k_2)\in\mathcal{C}_c(n,k)}(\overline{y}_{n_1,k_1,n_2,k_2}+\underline{y}_{n_1,k_1,n_2,k_2})(I_{B(n_1,k_1)B(n_2,k_2)} - I_{B(n_1,k_1)} - I_{B(n_2,k_2)})\nonumber\\&\quad\quad\quad\quad\quad\quad\quad\quad\quad + \frac{|L_1||L_2|}{|B(m_1+m_2,l_1+l_2)|}(I_{B(m_1,l_1)B(m_2,l_2)} - I_{B(m_1,l_1)} - I_{B(m_2,l_2)}) + \overline{w}+\underline{w},\label{eq:main theorem:induction hypothesis}
    \end{align}
    in which we used the facts that $\hat{y}_{n_1,k_1,n_2,k_2} = \overline{y}_{n_1,k_1,n_2,k_2}+\underline{y}_{n_1,k_1,n_2,k_2}$ when $(n_1,k_1,n_2,k_2)\neq (m_1,l_1,m_2,l_2)$, $\hat{w}=\overline{w}+\underline{w}$ and $\hat{y}_{m_1,l_1,m_2,l_2}=\frac{|L_1||L_2|}{|B(m_1+m_2,l_1+l_2)|}$. From the induction hypothesis the followings hold
    \begin{align*}
    \sum_{(n_1,k_1,n_2,k_2)\in\mathcal{C}_c(n,k)}\overline{y}_{n_1,k_1,n_2,k_2}(I_{B(n_1,k_1)B(n_2,k_2)} - I_{B(n_1,k_1)} - I_{B(n_2,k_2)})+\overline{w} &\geq \frac{|L_1|}{|B(m_1,k_1)|}I_{m_1,l_1},\\
    \sum_{(n_1,k_1,n_2,k_2)\in\mathcal{C}_c(n,k)}\underline{y}_{n_1,k_1,n_2,k_2}(I_{B(n_1,k_1)B(n_2,k_2)} - I_{B(n_1,k_1)} - I_{B(n_2,k_2)})+\underline{w} &\geq \frac{|L_2|}{|B(m_2,k_2)|}I_{m_2,l_2}.
    \end{align*}
    So we can continue (\ref{eq:main theorem:induction hypothesis}) as follows
    \begin{align*}
    &\sum_{(n_1,k_1,n_2,k_2)\in\mathcal{C}_c(n,k)}\hat{y}_{n_1,k_1,n_2,k_2}(I_{B(n_1,k_1)B(n_2,k_2)} - I_{B(n_1,k_1)} - I_{B(n_2,k_2)}) + \hat{w}\\
    &\quad\quad\geq \frac{|L_1|}{|B(m_1,l_1)|}I_{B(m_1,l_1)} + \frac{|L_2|}{|B(m_2,l_2)|}I_{B(m_2,l_2)} + \frac{|L_1||L_2|}{|B(m_1+m_2,l_1+l_2)|}(I_{B(m_1,l_1)B(m_2,l_2)} - I_{B(m_1,l_1)} - I_{B(m_2,l_2)})\\
    &\quad\quad\geq \frac{|L_1||L_2|}{|B(m_1+m_2,l_1+l_2)|}I_{B(m_1,l_1)B(m_2,l_2)}\\
    &\quad\quad= \frac{|L|}{|B(m_1+m_2,l_1+l_2)|}I_{B(m_1,l_1)B(m_2,l_2)},
    \end{align*}
    in which we used the following inequalities
    \begin{align*}
    \frac{|L_1|}{|B(m_1,l_1)|} \geq  \frac{|L_1||B(m_2,l_2)|}{|B(m_1+m_2,l_1+l_2)|}\geq \frac{|L_1||L_2|}{|B(m_1+m_2,l_1+l_2)|},\\
    \frac{|L_2|}{|B(m_2,l_2)|} \geq \frac{|B(m_1,l_1)||L_2|}{|B(m_1+m_2,l_1+l_2)|} \geq \frac{|L_1||L_2|}{|B(m_1+m_2,l_1+l_2)|}.
    \end{align*}
    These inequalities can be obtained from the following observation
    \begin{align*}
    |B(m_1+m_2,l_1+l_2)|=\binom{m_1+m_2}{l_1+l_2}\geq \binom{m_1}{l_1}\binom{m_2}{l_2}=|B(m_1,l_1)||B(m_2,l_2)|.
    \end{align*}
    Now suppose $R$ is a regular expression of $B(n,k)$. We proved that $(\hat{w},\hat{y})$ obtained from the procedure satisfies
    \begin{align*}
    \sum_{(n_1,k_1,n_2,k_2)\in\mathcal{C}_c(n,k)}\hat{y}_{n_1,k_1,n_2,k_2}(I_{B(n_1,k_1)B(n_2,k_2)} - I_{B(n_1,k_1)} - I_{B(n_2,k_2)}) + \hat{w} \geq I_{B(n,k)}.
    \end{align*}
    So all the constraints in $\mathcal{P}'(n,k)$ are satisfied and $(\hat{w},\hat{y})$ is a feasible solution. Furthermore we showed that the objective value is $J_{B(n,k)}^T\hat{w} = |R|$. This completes the proof.
    \end{proof}
    \begin{corollary} Let $R$ be any regular expression of $B(n,k)$. It holds that $\opt(\mathcal{P}'(n,k))=\opt(\mathcal{D}'(n,k))\leq |R|$.
    \end{corollary}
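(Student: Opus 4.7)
The plan is to derive the corollary as an immediate consequence of Theorem \ref{thm:main} combined with the Duality Theorem of linear programming. There are two steps: first, obtain the inequality $\opt(\mathcal{D}'(n,k)) \leq |R|$ directly from the procedure in Theorem \ref{thm:main}; second, identify $\opt(\mathcal{P}'(n,k))$ with $\opt(\mathcal{D}'(n,k))$ by verifying the hypotheses of the Duality Theorem.

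For the first step, let $R$ be any regular expression of $B(n,k)$. By Theorem \ref{thm:main}, feeding $R$ into the procedure yields a pair $(\hat{w},\hat{y}) \in {\real^+}^{\mathcal{C}_0(B(n,k))} \times {\real^+}^{\mathcal{C}_c(n,k)}$ that is feasible for $\mathcal{D}'(n,k)$ and whose objective value $J_{B(n,k)}^T\hat{w}$ equals $|R|$. Since $\mathcal{D}'(n,k)$ is a minimization problem, the existence of this single feasible solution already forces $\opt(\mathcal{D}'(n,k)) \leq |R|$.

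For the second step I would invoke the Duality Theorem to conclude $\opt(\mathcal{P}'(n,k)) = \opt(\mathcal{D}'(n,k))$. This requires both programs to be feasible. The primal is trivially feasible, as $x \equiv 0$ satisfies every constraint, and it is bounded, since the upper bounds $x_s \leq |s|$ cap the objective. The dual is feasible as well: the pair $(\hat{w},\hat{y})$ produced above is already a witness, or alternatively $w_s = 1$ for $s \in B(n,k)$ with all other variables zero trivially satisfies the single vector inequality defining dual feasibility. Chaining strong duality with the first step then yields $\opt(\mathcal{P}'(n,k)) = \opt(\mathcal{D}'(n,k)) \leq |R|$, which is the claim.

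There is no genuine obstacle in this argument; Theorem \ref{thm:main} already absorbs all the combinatorial work, and what remains is only the standard feasibility check needed to invoke strong duality. The only mild subtlety worth mentioning explicitly is that the bound holds for every regular expression of $B(n,k)$, so in particular for an optimal one, which is exactly what is needed to turn $\opt(\mathcal{P}'(n,k))$ into a lower bound on the length of optimal regular expressions of $B(n,k)$.
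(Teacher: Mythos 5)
Your proposal is correct and follows exactly the paper's argument: Theorem \ref{thm:main} gives a dual feasible solution with objective value $|R|$, hence $\opt(\mathcal{D}'(n,k))\leq |R|$, and the Duality Theorem supplies $\opt(\mathcal{P}'(n,k))=\opt(\mathcal{D}'(n,k))$. The only difference is that you spell out the feasibility checks needed to invoke strong duality, which the paper leaves implicit.
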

    \begin{proof}
    From Theorem \ref{thm:main} it holds that $\opt(\mathcal{D}'(n,k))\leq |R|$, and from the duality theorem it holds that $\opt(\mathcal{P}'(n,k))=\opt(\mathcal{D}'(n,k))$.
    \end{proof}
    
    The problem $\mathcal{P}'(n,k)$ is computationally tractable, i.e., its size is a polynomial in $n$. So this problem can be solved efficiently. We numerically verified that $\opt(\mathcal{P}'(n,k)) = |R_{n,k}|$ for many examples of $n$ and 
    $k$. So there is good evidence that the answer to the following open problem is affirmative.
    \begin{problem} Is it true that $\opt(\mathcal{P}'(n,k)) = |R_{n,k}|$?
    \end{problem}
    We suspect that the following holds true as well.
    \begin{problem}\label{problem:collapse} Is the following true? $$\opt(\mathcal{P}(B(n,k))) = \opt(\mathcal{P}'(n,k)) = |R_{n,k}| = \opt(\mathcal{D}'(n,k))=\opt(\mathcal{D}(B(n,k))).$$
    A simpler problem is perhaps to show that
    $$\opt(\mathcal{P}(B(n,k)))=\opt(\mathcal{D}(B(n,k)))= \opt(\mathcal{P}'(n,k)) = \opt(\mathcal{D}'(n,k)).$$
    We know that 
    $$\opt(\mathcal{P}(B(n,k)))=\opt(\mathcal{D}(B(n,k))) \leq \opt(\mathcal{P}'(n,k)) = \opt(\mathcal{D}'(n,k)),$$
    so a strategy for proving that they are all equal is to prove the existence of a procedure that converts feasible solutions of $\mathcal{D}(B(n,k))$ to feasible solutions of $\mathcal{D}'(n,k)$ with the same objective value.
    \end{problem}
    
    In the next section we give a feasible solution of $\mathcal{P}'(n,k)$ with an objective value of $\Theta(n\log^k(n))$. This will complete the proof that $R_{n,k}$ is asymptotically optimal.

\section{A matching lower bound for binomial languages}\label{section:feasible}
    The goal of this section is to construct a feasible solution of $\mathcal{P}'(n,k)$ with an objective value of $\Theta(n\log^k(n))$. Such a construction completes the proof that $R_{n,k}$ is asymptotically optimal.

    The alphabet $\Sigma$ in this section is the binary alphabet. We say a map $g:\Sigma^+\rightarrow \mathcal{R}^+$ is a feasible solution of $\mathcal{P}'(n,k)$ if $\overline{x}$ defined such that $\overline{x}_s=g(s)$ for all $s\in\mathcal{C}_0(n,k)$ is a feasible solution of $\mathcal{P}'(n,k)$. Similarly the objective value of $g$ in $\mathcal{P}'(n,k)$ refers to the quantity $\sum_{s\in B(n,k)} g(s)$. 
    
    We begin by defining a map $g:\Sigma^+\rightarrow \mathcal{R}^+$. First we specify the action of this map on strings in $B(n,0)\cup B(n,1)$. Let $g(0^n)=n$ for all $n > 0$. Let $g(s)=\ln(en)$ for all $s\in B(n,1)$ and for all $n > 0$. From Example \ref{ex:binomial} and the discussion in the previous section we know that $g$ is feasible in $\mathcal{P}'(n,1)$ and its objective value is $n\ln(en)$.
    
    Next we give the full specifications of $g$. For any string $s$ for which $k=|s|_1 > 1$ let 
    \begin{align*}
        g(s) &= \alpha_{k-1}\frac{\ln^{k-1}(p)}{p^{k-1}}.
    \end{align*}
    where $p$ is the maximum number of symbols between any pairs of $1$s in $s$, and  $\alpha_1,\alpha_2,\ldots$ are real positive constants we choose later. For example we have
    \begin{align*}
    g({11}) &= \frac{\alpha_1\ln(2)}{2},g({010001})=\frac{\alpha_1\ln(5)}{5}, g({010101})=\frac{\alpha_2\ln^2(5)}{5^2},g({01010010100})=\frac{\alpha_3\ln^3(8)}{8^3}.
    \end{align*}
    
    Before we can calculate the objective value of $g$ in $\mathcal{P}'(n,k)$ for $k > 1$ we need to review a few identities from calculus.
    
    Let $k\geq 0$ be a constant. The followings hold
    \begin{align*}
    \binom{n}{k} &= \Theta(n^k),\\
    \sum_{i=1}^{n} \frac{\ln^k(i)}{i} &= \Theta\Big(\ln^{k+1}(n)\Big),\\
    \sum_{i=1}^{n} \ln^k(i) &= \Theta\Bigl(n\ln^k(n)\Bigr).    
    \end{align*}
    
    When we write variable length sums involving $\Theta$ of the same function, e.g, $\sum_{i=m+1}^n \Theta(h(i))$ it is always assumed that the implied constants are the same in all the terms involving $\Theta$. 
    We are now ready to show that the objective value of $g$ in $\mathcal{P}'(n,k)$ is $\Theta(n\ln^k(n))$. We know this holds for $k=1$. For $k > 1$ this is calculated as follows
    \begin{align*}
    \sum_{s\in B(n,k)} g(s)&= \sum_{j=0}^{n-k}\sum_{i=k}^{n-j}\sum_{s\in B(i-2,k-2)} g(0^j1s10^{n-i-j})\\
    &= \sum_{j=0}^{n-k}\sum_{i=k}^{n-j}\sum_{s\in B(i-2,k-2)} \alpha_{k-1}\frac{\ln^{k-1}(i)}{i^{k-1}}\\
    &= \sum_{j=0}^{n-k}\sum_{i=k}^{n-j}\alpha_{k-1}\binom{i-2}{k-2} \frac{\ln^{k-1}(i)}{i^{k-1}}\\
    &= \sum_{j=0}^{n-k}\sum_{i=k}^{n-j} \Theta\Big(\frac{\ln^{k-1}(i)}{i}\Big)\\
    &= \sum_{j=0}^{n-k} \Theta\Big(\ln^{k}(n-j)\Big)\\
    &= \Theta\Big(n\ln^{k}(n)\Big).
    \end{align*}
        
    Next we prove the feasibility of $g$ in $\mathcal{P}'(n,k)$.
    For the special case of $k=2$ we can in fact prove that $g$ is feasible not only in $\mathcal{P}'(n,k)$ but in $\mathcal{P}(B(n,k))$ as well. A proof of this fact is given in the next theorem.
    
    \begin{theorem}\label{thm:k=2} The constant $\alpha_1$ can be chosen so that the map $g$ is feasible in $\mathcal{P}(B(n,2))$.
    \end{theorem}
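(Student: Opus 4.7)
The plan is to verify feasibility of $g$ by checking every constraint of $\mathcal{P}(B(n,2))$ directly, with a case analysis on the shapes of concatenation pairs. Boundedness $0 \leq g(s) \leq |s|$ is immediate since $\ln(em)/m \leq 1$ for all $m \geq 1$ and $\alpha_1 \ln(p)/p \leq \alpha_1/e \leq 1$ once $\alpha_1 \leq e$. For each concatenation constraint indexed by $(K_1, K_2) \in \mathcal{C}_c(B(n,2))$ with $K_i \subseteq B(n_i, k_i)$, I would split on $(k_1, k_2) \in \{(0,0), (1,0), (0,1), (1,1), (2,0), (0,2)\}$. The cases $(0,0)$, $(2,0)$, $(0,2)$ are immediate because prepending or appending $0^m$ does not change the span $p$ between two $1$s, so each summand on the left matches a summand on the right with the nonnegative slack $g(0^m) = m$. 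The cases $(1,0)$ and $(0,1)$ reduce verbatim to inequality (\ref{eq:example:binomial:inequality}) already established in Example \ref{ex:binomial}.

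The real content is case $(1,1)$. I parameterize $s_1 \in K_1 \subseteq B(n_1,1)$ by the right-distance $q_1 := n_1 - i_1 + 1 \in \{1,\ldots,n_1\}$ of its unique $1$, and $s_2 \in K_2 \subseteq B(n_2,1)$ by the position $q_2 := i_2 \in \{1,\ldots,n_2\}$ of its unique $1$. The two $1$s in $s_1 s_2$ then span exactly $p = q_1 + q_2$ positions, so $g(s_1 s_2) = \alpha_1 \ln(q_1+q_2)/(q_1+q_2)$. Writing $A_i \subseteq \{1,\ldots,n_i\}$ for the resulting subsets (so $|A_i| = |K_i|$), the constraint becomes
$$
\alpha_1 \sum_{q_1 \in A_1}\sum_{q_2 \in A_2} \frac{\ln(q_1+q_2)}{q_1+q_2} \;\leq\; |A_1|\ln(en_1) + |A_2|\ln(en_2).
$$
The key elementary bound I would use is $\ln(q_1+q_2)/(q_1+q_2) \leq \ln(2M)/M$ with $M = \max(q_1,q_2)$, valid since $M \leq q_1+q_2 \leq 2M$. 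Splitting the double sum by whether $q_1 \geq q_2$ or $q_1 < q_2$, and using that $|\{q \in A_2 : q \leq q_1\}| \leq q_1$ (as $A_2$ consists of positive integers), the first piece is at most $\sum_{q_1 \in A_1} \ln(2 q_1) \leq |A_1| \ln(2 n_1)$, and symmetrically the second is at most $|A_2| \ln(2 n_2)$. Since $\ln 2 < 1$, one has $\ln(2 n_i) \leq \ln(e n_i)$, so choosing $\alpha_1 = 1$ (or any $\alpha_1 \in (0, 1]$) closes the inequality.

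The hard part is precisely case $(1, 1)$. A naive per-pair bound $g(s_1 s_2) \leq g(s_1) + g(s_2)$ fails because, summed over pairs, it would cost an extra multiplicative factor of $\min(|K_1|, |K_2|)$ on the right-hand side; trying to pass to the worst case $K_i = B(n_i, 1)$ does not help either, since the left-hand side grows super-linearly in $|A_i|$ while the right grows linearly. The splitting-by-max argument circumvents both issues by charging each half of the sum to only $|A_i|$ row contributions rather than to the full product, and the universal slack $\ln e - \ln 2 = 1 - \ln 2 > 0$ is exactly what permits a single positive constant $\alpha_1$ to work uniformly in $n_1$ and $n_2$.
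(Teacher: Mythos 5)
Your proof is correct, and in the critical $(k_1,k_2)=(1,1)$ case it takes a genuinely different and more elementary route than the paper. The paper first argues, via monotonicity of $\ln(x)/x$, that the left-hand side only increases when the arbitrary sets $K_1,K_2$ are replaced by full binomial languages, and then discharges the resulting inequality $\sum_{i=1}^{m}\sum_{j=1}^{n}\frac{\ln(i+j)}{i+j}\leq \beta\gamma'(2m\ln m+2n\ln n)$ in Lemma \ref{lemma:the first complete sign}, which in turn leans on the integral-comparison estimates (\ref{eq:strict notion 1}) and (\ref{eq:strict notion 2}) (stated without proof) and the convexity corollary; the resulting constant $\alpha_1=\tfrac{1}{2\beta\gamma'}$ is left implicit, and the monotonicity step even needs a footnoted patch because $\frac{\ln 2}{2}<\frac{\ln 3}{3}$. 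You instead keep the sets $A_1,A_2$ arbitrary and split the double sum by $M=\max(q_1,q_2)$, using $\frac{\ln(q_1+q_2)}{q_1+q_2}\leq\frac{\ln(2M)}{M}$ together with the counting bound $|\{q\in A_2:q\leq q_1\}|\leq q_1$; this charges each half of the sum to one index set only and yields the explicit constant $\alpha_1=1$ with slack $1-\ln 2$, avoiding both the rearrangement step and the unproved asymptotic inequalities. What the paper's route buys in exchange is reusability: Lemma \ref{lemma:the first complete sign} and the surrounding $\Theta$-machinery are set up to generalize to $k\geq 3$ (Lemma \ref{lemma:complete expressions preliminary}), whereas your max-splitting trick is tailored to two $1$s. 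One small quibble with your closing commentary: the claim that ``passing to the worst case $K_i=B(n_i,1)$ does not help'' is not right --- that reduction is exactly what the paper does, and the double sum over full binomial languages is in fact $O(m\ln m+n\ln n)$, not super-linear; this does not affect your proof, which never relies on that remark.
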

    \begin{proof}
    Let $(K_1,K_2)\in \mathcal{C}_c(B(n,2))$. We need to show that \begin{align}\label{eq:thm:k=2:eq1}
    \sum_{s\in K_1K_2}g(s) \leq \sum_{s\in K_1} g(s) + \sum_{s\in K_2} g(s).
    \end{align}
    First note that $K_1\subseteq B(n_1,k_1)$ and $K_2\subseteq B(n_2,k_2)$ for some $(n_1,k_1,n_2,k_2)\in \mathcal{C}_c(n,2)$. The only possibilities for $(k_1,k_2)$ are $(0,0),(0,1),(1,0),(1,1),(0,2),(2,0)$. The inequality (\ref{eq:thm:k=2:eq1}) is nontrivial only when $(k_1,k_2)=(1,1)$. So suppose that $K_1\subseteq B(n_1,1)$ and $K_2\subseteq B(n_2,1)$. With this assumption the right hand side of (\ref{eq:thm:k=2:eq1}) is simply $|K_1|\ln(en_1) + |K_2|\ln(en_2)$. So we need to show that
    \begin{align}\label{eq:thm:k=2:eq2}
    \sum_{s\in K_1K_2}g(s) \leq |K_1|\ln(en_1) + |K_2|\ln(en_2).
    \end{align}
    Now suppose $$K_1 = \left\{0^{n_1-p_1-1}10^{p_1},\ldots,0^{n_1-p_{|K_1|}-1}10^{p_{|K_1|}}\right\}$$ for distinct nonnegative integers $p_1,\ldots,p_{|K_1|}$. Also suppose $$K_2 = \left\{0^{q_1}10^{n_2-q_1-1},\ldots,0^{q_{|K_2|}}10^{n_2-q_{|K_2|}-1}\right\}$$  for distinct nonegative integers $q_1,\ldots, q_{|K_2|}$. The inequality (\ref{eq:thm:k=2:eq2}) is then equivalent to the following
    \begin{align*}
        \sum_{i=1}^{|K_1|}\sum_{j=1}^{|K_2|}\alpha_1\frac{\ln(p_i + q_j+2)}{p_i+q_j+2} \leq |K_1|\ln(en_1) + |K_2|\ln(en_2).
    \end{align*}
    The right hand side does not depend on $p_i$s and $q_j$s. On the other hand, since $\frac{\ln(x)}{x}$ is a decreasing function when $x \geq 3$, the left hand side is maximized\footnote{A more precise analysis must take into account that $\frac{\ln(2)}{2}< \frac{\ln(3)}{3}$. This can be handled by means of a simple case analysis.} when $p_i=i-1$ and $q_j=j-1$ for $i\in\{1,\ldots,|K_1|\}$ and $j\in\{1,\ldots,|K_2|\}$. So we only need to show that 
    \begin{align*}
        \sum_{i=1}^{|K_1|}\sum_{j=1}^{|K_2|}\alpha_1\frac{\ln(i +j)}{i+j} \leq |K_1|\ln(en_1) + |K_2|\ln(en_2).
    \end{align*}
    This effectively means that we only need to show that the inequality (\ref{eq:thm:k=2:eq1}) holds for pairs of languages $K_1=B(n_1,1)$ and $K_2=B(n_2,1)$ and for all choices of $n_1$ and $n_2$. In other words feasibility of $g$ in $\mathcal{P}'(n,2)$ implies feasibility in $\mathcal{P}(B(n,2))$. The feasibility in $\mathcal{P}'(n,2)$ is proved in the next lemma.
    \end{proof}
    
    In the beginning of this section we stated the asymptotic relations $\sum_{i=1}^{n} \frac{\ln(i)}{i} = O\Big(\ln^2(n)\Big)$ and $\sum_{i=1}^{n} \ln^2(i) = O\Bigl(n\ln^2(n)\Bigr)$. We state without proof that these can be strengthened in the sense that follows. There exists a constant $\beta > 0$ such that 
    \begin{align}\label{eq:strict notion 1}
    \sum_{i=m+1}^{n} \frac{\ln(i)}{i} &\leq \beta (\ln^2(n)-\ln^2(m))
    \end{align}
    for all $n\geq m > 0$. Similarly, there exists a constant $\gamma> 0$ such that
    \begin{align}\label{eq:strict notion 3}
    \sum_{i=m+1}^{n} \ln^2(i)\leq \gamma \Bigl(n\big(\ln^2(n)-2\ln(n)+2\big)-m\big(\ln^2(m)-2\ln(m)+2\big)\Bigr),
    \end{align}
    for all $n\geq m > 0$. In fact we mentioned (\ref{eq:strict notion 3}) just to provide intuition for the following fact which we use in the proof of the following lemma. There exists a constant $\gamma'$ such that
    \begin{align}\label{eq:strict notion 2}
    &\sum_{i=1}^{m} (\ln^2(i+n)-\ln^2(i))\nonumber\\ &\quad\leq \gamma' \Bigl((m+n)\big(\ln^2(m+n)-2\ln(m+n)\big)-m\big(\ln^2(m)-2\ln(m)\big)-n\big(\ln^2(n)-2\ln(n)\big)\Bigr),
    \end{align}
    for all $m,n > 0$.
    \begin{lemma}\label{lemma:the first complete sign} The constant $\alpha_1$ can be chosen such that the following holds
    \begin{align*} &\sum_{s\in B(m,1)B(n,1)} g(s) \leq \sum_{s\in B(m,1)} g(s) + \sum_{s\in B(n,1)} g(s),
    \end{align*}
    for all $m,n\geq 1$.
    \end{lemma}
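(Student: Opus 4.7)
The plan is to unpack the double sum on the left explicitly, reduce it via the two stated estimates (\ref{eq:strict notion 1}) and (\ref{eq:strict notion 2}) to a clean expression, show that expression is bounded by a universal constant times $m\ln(em)+n\ln(en)$, and then choose $\alpha_1$ small enough.

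First I would parameterize: every string in $B(m,1)B(n,1)$ has the unique form $0^{m-1-i}10^{i+j}10^{n-1-j}$ with $(i,j)\in\{0,\ldots,m-1\}\times\{0,\ldots,n-1\}$, and the maximal substring running from a $1$ to a $1$ has length $i+j+2$. By the definition of $g$ on weight-two strings, the left-hand side of the lemma therefore equals $\alpha_1\sum_{u=1}^m\sum_{v=1}^n \ln(u+v)/(u+v)$ after the change of variables $u=i+1$, $v=j+1$, while the right-hand side is $m\ln(em)+n\ln(en)=m(1+\ln m)+n(1+\ln n)$.

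Next I would apply (\ref{eq:strict notion 1}) to the inner sum, which after the index shift $w=u+v$ gives $\sum_{v=1}^n \ln(u+v)/(u+v)\le \beta(\ln^2(u+n)-\ln^2 u)$. Summing over $u$ and invoking (\ref{eq:strict notion 2}) then produces
\begin{align*}
\sum_{u=1}^m\sum_{v=1}^n \frac{\ln(u+v)}{u+v} \le \beta\gamma'\, F(m,n),
\end{align*}
where $F(m,n)=(m+n)(\ln^2(m+n)-2\ln(m+n))-m(\ln^2 m-2\ln m)-n(\ln^2 n-2\ln n)$. It then suffices to show $F(m,n)\le C[m(1+\ln m)+n(1+\ln n)]$ for an absolute constant $C$, because the choice $\alpha_1\le 1/(C\beta\gamma')$ will close the argument.

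The main obstacle is this last asymptotic estimate. I would write $F=G-2H$ where $G(m,n)=(m+n)\ln^2(m+n)-m\ln^2 m-n\ln^2 n$ and $H(m,n)=(m+n)\ln(m+n)-m\ln m-n\ln n$; convexity of $x\mapsto x\ln x$ gives $H\ge 0$, so $F\le G$. Assuming $m\ge n$ by symmetry and splitting $G=m(\ln^2(m+n)-\ln^2 m)+n(\ln^2(m+n)-\ln^2 n)$, the first piece is at most $2n\ln(m+n)$ using $\ln(1+n/m)\le n/m$ and $\ln(m+n)+\ln m\le 2\ln(m+n)$; the second piece is at most $2m\ln(m+n)$ using the crucial asymmetric estimate $n\ln(1+m/n)\le m$. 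Hence $G\le 2(m+n)\ln(m+n)$. A second application of the convexity of $x\ln x$, through the Jensen bound $(m+n)\ln((m+n)/2)\le m\ln m+n\ln n$, gives $(m+n)\ln(m+n)\le m\ln m+n\ln n+(m+n)\ln 2$, and since $\ln 2<1$ this yields $G\le 2[m(1+\ln m)+n(1+\ln n)]$, so $C=2$ suffices. The delicate point, and the only place the argument is genuinely asymmetric, is bounding $n(\ln^2(m+n)-\ln^2 n)$: the naive bound $n\ln^2(m+n)$ is too large when $m\gg n$, and only the estimate $n\ln(1+m/n)\le m$ keeps this piece linear in $m$ up to a logarithmic factor.
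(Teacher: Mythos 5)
Your proof is correct and follows essentially the same route as the paper's: both reduce the claim to bounding $\sum_{i=1}^{m}\sum_{j=1}^{n}\frac{\ln(i+j)}{i+j}$ via (\ref{eq:strict notion 1}) and (\ref{eq:strict notion 2}), and then to the key inequality $(m+n)\ln^2(m+n)-m\ln^2(m)-n\ln^2(n)\leq 2(m+n)\ln(m+n)$, which is exactly Corollary \ref{lemma:convexity consequence}. The only cosmetic differences are that the paper derives that inequality from convexity of $\alpha\mapsto\alpha\ln^2(\alpha)$ via Jensen while you use a direct two-piece split with $\ln(1+x)\leq x$, and that the paper retains the lower-order terms of $F(m,n)$ to cancel $2(m+n)\ln(m+n)$ exactly whereas you discard them and reabsorb the resulting $(m+n)\ln 2$ slack using $\ln 2<1$; either way the same choice $\alpha_1=\frac{1}{2\beta\gamma'}$ works.
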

    
    \begin{proof}
    We must show that there exists $\alpha_1$ such that
    \begin{align*}
        \sum_{i=1}^{m}\sum_{j=1}^{n}\alpha_1\frac{\ln(i +j)}{i+j} \leq m\ln(em) + n\ln(en),
    \end{align*}
    for all $m,n \geq 1$.  First from (\ref{eq:strict notion 1}) and (\ref{eq:strict notion 2}) it holds that
    \begin{align}
        &\sum_{i=1}^{m}\sum_{j=1}^{n}\frac{\ln(i+j)}{i+j} \nonumber\\&\quad\leq \sum_{i=1}^{m} \beta(\ln^2(i+n)-\ln^2(i))\nonumber\\
        &\quad\leq \beta\gamma' \Bigl((m+n)\big(\ln^2(m+n)-2\ln(m+n)\big)-m\big(\ln^2(m)-2\ln(m)\big)-n\big(\ln^2(n)-2\ln(n)\big)\Bigr).\label{eq:lemma:the first complete sign:eq1}
    \end{align}
    Using the inequality $$(m+n)\ln^2(m+n)-m\ln^2(m)-n\ln^2(n)\leq 2(m+n)\ln(m+n)$$ that is proved in the corollary to the next lemma we continue (\ref{eq:lemma:the first complete sign:eq1}) as follows
    \begin{align*}
        &\sum_{i=1}^{m}\sum_{j=1}^{n}\frac{\ln(i+j)}{i+j}\\
        &\quad\leq \beta\gamma' \Bigl((m+n)\big(\ln^2(m+n)-2\ln(m+n)\big)-m\big(\ln^2(m)-2\ln(m)\big)-n\big(\ln^2(n)-2\ln(n)\big)\Bigr)\\
        &\quad\leq \beta\gamma'\big(2m\ln(m) + 2n\ln(n)\big).
    \end{align*}
    So choosing $\alpha_1=\frac{1}{2\beta\gamma'} > 0$ completes the proof.
    \end{proof}
    
    \begin{lemma}\label{lemma:convexity} For any integer $k \geq 0$, the function $h_k(\alpha)=\alpha\ln^k(\alpha)$ is convex over $[1,\infty)$.
    \end{lemma}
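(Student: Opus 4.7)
The plan is to verify convexity directly from the second derivative test, so it suffices to show $h_k''(\alpha) \geq 0$ on $[1, \infty)$. The case $k = 0$ is immediate since $h_0(\alpha) = \alpha$ is affine. For $k \geq 1$, I would differentiate $h_k(\alpha) = \alpha \ln^k(\alpha)$ twice using the product rule and the identity $\frac{d}{d\alpha}\ln^j(\alpha) = \frac{j \ln^{j-1}(\alpha)}{\alpha}$. This gives $h_k'(\alpha) = \ln^k(\alpha) + k\ln^{k-1}(\alpha)$ and, after a second differentiation and collecting terms, an expression of the form $h_k''(\alpha) = \frac{k}{\alpha}\bigl(\ln^{k-1}(\alpha) + (k-1)\ln^{k-2}(\alpha)\bigr)$, which for $k \geq 2$ factors neatly as $\frac{k\ln^{k-2}(\alpha)}{\alpha}\bigl(\ln(\alpha) + k - 1\bigr)$, and for $k = 1$ reduces to $1/\alpha$.

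The key observation is that for $\alpha \in [1, \infty)$ we have $\ln(\alpha) \geq 0$, which makes every factor in the expression for $h_k''(\alpha)$ nonnegative: $\alpha \geq 1 > 0$, any nonnegative power of $\ln(\alpha)$ is nonnegative, and $\ln(\alpha) + k - 1 \geq k - 1 \geq 1$ whenever $k \geq 2$. Combined with the trivial cases $k \in \{0, 1\}$, this yields $h_k''(\alpha) \geq 0$ throughout $[1, \infty)$, which is equivalent to convexity of $h_k$ on that interval.

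There is essentially no obstacle here; the only point requiring a moment of care is the boundary $\alpha = 1$, where $\ln(\alpha) = 0$ and the factored formula involves $0^{k-2}$. This is harmless: for $k = 2$ one reads off $h_2''(1) = 2$ from the unfactored formula, and for $k \geq 3$ the same formula gives $h_k''(1) = 0$, both consistent with nonnegativity. Since $h_k''$ is continuous on $[1, \infty)$, verifying nonnegativity on $(1, \infty)$ and then at $\alpha = 1$ separately gives the full conclusion.
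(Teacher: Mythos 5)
Your proof is correct and follows essentially the same route as the paper: compute $h_k'(\alpha)=\ln^k(\alpha)+k\ln^{k-1}(\alpha)$ and $h_k''(\alpha)=\frac{k}{\alpha}\bigl(\ln^{k-1}(\alpha)+(k-1)\ln^{k-2}(\alpha)\bigr)$, then conclude from nonnegativity of the second derivative on $[1,\infty)$. Your extra care at the boundary $\alpha=1$ is a welcome (if minor) refinement over the paper's one-line appeal to positivity.
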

    \begin{proof}
    For $k=0$ the function is linear so it is convex. Now assume $k\geq 1$. Proof follows from the positivity of the second derivative which is calculated as follows
    \begin{align*}
    h_k'(\alpha) &= \ln^k(\alpha) + k\ln^{k-1}(\alpha)\\
    h_k''(\alpha) &= k\frac{\ln^{k-1}(\alpha)}{\alpha} + k(k-1)\frac{\ln^{k-2}(\alpha)}{\alpha}.
    \end{align*}
    \end{proof}
    
    \begin{corollary}\label{lemma:convexity consequence} For all $m,n\geq 1$, it holds that \begin{align*}
    &(m+n)\ln^2(m+n) - m\ln^2(m) - n\ln^2(n)\leq 2(m+n)\ln(m+n).
    \end{align*}
    \end{corollary}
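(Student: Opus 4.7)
The plan is to prove the inequality directly by algebraic manipulation, splitting the coefficient $m+n$ on the leading term and applying two elementary estimates. The convexity of $h_k$ from Lemma \ref{lemma:convexity} is not actually needed for this particular corollary; rather, the corollary is a simple calculus identity that will be used to close the chain (\ref{eq:lemma:the first complete sign:eq1}) in the proof of Lemma \ref{lemma:the first complete sign}.

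First I would write
\begin{align*}
(m+n)\ln^2(m+n) - m\ln^2(m) - n\ln^2(n) = m\bigl[\ln^2(m+n)-\ln^2(m)\bigr] + n\bigl[\ln^2(m+n)-\ln^2(n)\bigr],
\end{align*}
using $(m+n)\ln^2(m+n) = m\ln^2(m+n)+n\ln^2(m+n)$. Then, for each bracketed term, factor the difference of squares:
\begin{align*}
\ln^2(m+n)-\ln^2(m) = \bigl(\ln(m+n)+\ln(m)\bigr)\bigl(\ln(m+n)-\ln(m)\bigr),
\end{align*}
and similarly for the other term.

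Next, I apply two bounds. Since $m,n\geq 1$, the factor $\ln(m+n)+\ln(m)$ is bounded above by $2\ln(m+n)$, and an analogous bound holds with $m$ replaced by $n$. For the remaining logarithmic factor, $\ln(m+n)-\ln(m) = \ln(1+n/m) \leq n/m$, using the standard inequality $\ln(1+x)\leq x$ for $x\geq 0$; likewise $\ln(m+n)-\ln(n)\leq m/n$. Substituting these in gives
\begin{align*}
m\bigl[\ln^2(m+n)-\ln^2(m)\bigr] \leq m\cdot 2\ln(m+n)\cdot \tfrac{n}{m} = 2n\ln(m+n),
\end{align*}
and symmetrically $n\bigl[\ln^2(m+n)-\ln^2(n)\bigr]\leq 2m\ln(m+n)$. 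Summing the two yields the desired bound $2(m+n)\ln(m+n)$.

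There is no real obstacle here: the argument is a two-line algebraic identity followed by two textbook estimates. The only point worth verifying is that both estimates hold uniformly for $m,n\geq 1$ (so that $m$ and $n$ stay in the domain where $\ln$ is nonnegative and $\ln(1+x)\leq x$ applies). The proof is self-contained and does not invoke the convexity of $h_2$; the placement as a corollary to Lemma \ref{lemma:convexity} is presumably just for bundling, since both facts are used together in Lemma \ref{lemma:the first complete sign}.
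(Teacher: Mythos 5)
Your proof is correct, and it takes a genuinely different route from the paper. The paper derives the corollary from Lemma \ref{lemma:convexity}: midpoint convexity of $h_2(\alpha)=\alpha\ln^2(\alpha)$ gives $(m+n)\ln^2\bigl(\tfrac{m+n}{2}\bigr)\leq m\ln^2(m)+n\ln^2(n)$, and expanding $\ln^2\bigl(\tfrac{m+n}{2}\bigr)=(\ln(m+n)-\ln 2)^2$ and discarding the positive term $(m+n)\ln^2(2)$ yields the bound with constant $2\ln 2<2$. Your argument instead splits $(m+n)\ln^2(m+n)$ into $m\ln^2(m+n)+n\ln^2(m+n)$, factors each difference of squares, and applies $\ln(m+n)+\ln(m)\leq 2\ln(m+n)$ together with $\ln(1+x)\leq x$; all factors involved are nonnegative for $m,n\geq 1$, so the term-by-term multiplication of upper bounds is legitimate, and the two pieces sum exactly to $2(m+n)\ln(m+n)$. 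What each approach buys: the paper's convexity route actually proves the slightly stronger estimate with leading constant $2\ln 2\approx 1.386$ (and explains why the statement is packaged as a corollary to Lemma \ref{lemma:convexity}), while yours is fully self-contained, needs no auxiliary lemma, and generalizes transparently since it only uses monotonicity of $\ln$ and the tangent-line bound. Either suffices for the use made of the corollary in Lemma \ref{lemma:the first complete sign}.
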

    \begin{proof}
    By convexity of $h_2(\alpha)=\alpha\ln^2(\alpha)$ it holds that $$\frac{m+n}{2}\ln^2(\frac{m+n}{2})\leq \frac{m}{2}\ln^2(m) + \frac{n}{2}\ln^2(n)$$ or equivalently $$(m+n)\ln^2(\frac{m+n}{2})\leq m\ln^2(m) + n\ln^2(n)$$ The proof follows when we expand the left hand side as follows
    \begin{align*}
    (m+n)\ln^2(\frac{m+n}{2}) &= (m+n)(\ln(m+n)-\ln(2))^2\\
    &= (m+n)\ln^2(m+n) - 2\ln(2)(m+n)\ln(m+n)+(m+n)\ln^2(2).
    \end{align*}
    Observe that $2\ln(2)<2$.
    \end{proof}
    
    Unlike $k=2$, we do not know if the feasibility of $g$ in $\mathcal{P}'(n,k)$ implies the feasibility of $g$ in $\mathcal{P}(B(n,k))$ for $k\geq 3$. 
    
    Recall the identity
    \begin{align*}
    \sum_{i=1}^{n} \frac{\ln^k(i)}{i} &= \Theta\Big(\ln^{k+1}(n)\Big).
    \end{align*}
    Similar to (\ref{eq:strict notion 1})-(\ref{eq:strict notion 2}) a stronger statement holds here as well. There exist positive constants $\beta_1$ and $\beta_2$ (depending on $k$) such that $$\beta_1 \ln^{k+1}(n) \leq \sum_{i=1}^{n} \frac{\ln^k(i)}{i} \leq \beta_2\ln^{k+1}(n)$$ for all $n \geq 1$ (and not just for sufficiently large $n$). A similar situation holds for the identity 
    \begin{align*}
    \binom{n}{k} &= \Theta(n^k).
    \end{align*}
    Therefore in what follows when writing the aforementioned identities we have this stricter interpretation of $\Theta$ in mind. We are following this convention here to avoid writing constants over and over. Of course this interpretation does not generalize to all identities involving asymptotic notations. For example consider the anomalous identity $n-n^{1-\epsilon}\ln(n)=\Theta(n)$ where the constant $\epsilon > 0$ is chosen to be sufficiently small. 
    
    For constants $k\geq 0$ and $l\geq 2$ it holds that
    \begin{align*}
    \sum_{i=1}^{\infty} \frac{\ln^k(m+i)}{(m+i)^l} &= O\Bigl(\frac{\ln^k(m)}{m^{l-1}}\Bigr).
    \end{align*}
    Similar to the two $\Theta$ identities above we have a stricter interpretation in mind when we write this identity. The interpretation is as follows. There exists a constant $\beta_3 > 0$ such that $\sum_{i=1}^{\infty} \frac{\ln^k(m+i)}{(m+i)^l} \leq  \beta_3\frac{\ln^k(m)}{m^{l-1}}$ for all $m \geq 1$.

    \begin{lemma}\label{lemma:complete expressions preliminary} For all $k \geq 2$, we can choose the constants $\alpha_2,\alpha_3,\ldots$ such that
    \begin{align*} 
    \sum_{s\in B(m,k)B(n,1)} g(s) \leq \sum_{s\in B(m,k)} g(s) + \sum_{s\in B(n,1)} g(s),
    \end{align*}
    for all $m\geq k$ and $n\geq 1$.
    \end{lemma}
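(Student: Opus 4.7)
The plan is to bound the left hand side by $C_k\,\alpha_k\, m\ln^k(m)$ for a constant $C_k$ depending on $k$ alone, and then to absorb this into the first term of the right hand side by choosing $\alpha_k$ small relative to $\alpha_{k-1}$. The opening computation of this section already shows $\sum_{s\in B(m,k)} g(s) \geq D_k\,\alpha_{k-1}\, m\ln^k(m)$ for some $D_k>0$ uniform in $m$, provided one reads that identity under the stricter interpretation of $\Theta$ adopted earlier in this section. Taking $\alpha_k \leq D_k\,\alpha_{k-1}/C_k$ then finishes the argument, and this choice can be made inductively in $k$ starting from the value of $\alpha_1$ fixed in Lemma \ref{lemma:the first complete sign}. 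A pleasant byproduct is that the contribution $\sum_{s\in B(n,1)} g(s) = n\ln(en)$ turns out to be unnecessary: the $B(m,k)$ term alone absorbs everything, thanks to the rapid decay of $\ln^k(v)/v^k$ for $k\geq 2$.

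To carry this out, I would first parameterize every $s\in B(m,k)B(n,1)$ by the position $p_1\in\{1,\ldots,m-k+1\}$ of the first $1$ in the first factor and the position $j\in\{1,\ldots,n\}$ of the unique $1$ in the second factor. Since the first factor has $k\geq 2$ ones, the first $1$ of $s$ lies in the first factor and the last $1$ lies in the second, so the span of $1$s in $s$ is $p = m + j - p_1 + 1$, depending only on $p_1$ and $j$. The number of $s$ realising a given $(p_1,j)$ is $\binom{m-p_1}{k-1}$, counting the placements of the remaining $k-1$ ones of the first factor. After the substitutions $u = m-p_1$ and $v = u+j+1$ the sum becomes
\begin{equation*}
\sum_{s\in B(m,k)B(n,1)} g(s) \;=\; \alpha_k \sum_{u=k-1}^{m-1} \binom{u}{k-1} \sum_{v=u+2}^{u+n+1} \frac{\ln^k(v)}{v^k}.
\end{equation*}

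The next step is to bound the inner sum uniformly in $n$ by extending it to infinity and invoking the tail estimate $\sum_{i\geq 1} \ln^k(m'+i)/(m'+i)^l \leq \beta_3\,\ln^k(m')/{m'}^{l-1}$ stated just before the lemma, applied with $l=k\geq 2$ and $m' = u+1$; this yields $\beta_3\,\ln^k(u+1)/(u+1)^{k-1}$. Combining with the elementary bound $\binom{u}{k-1} \leq (u+1)^{k-1}/(k-1)!$ cancels the power of $u+1$ entirely, and what remains is a constant times $\sum_{u=k-1}^{m-1} \ln^k(u+1)$, which is then controlled by the strict form of $\sum_{v=1}^{m} \ln^k(v) = \Theta(m\ln^k(m))$. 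Putting these pieces together delivers the target bound $C_k\,\alpha_k\, m\ln^k(m)$ with $C_k = \beta_3\,\beta_k'/(k-1)!$, where $\beta_k'$ is the constant in the latter estimate.

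The obstacle I expect is bookkeeping rather than analysis: each asymptotic estimate invoked above must be of the uniform-in-$m$ and uniform-in-$n$ kind explicitly set up in this section, not an eventual bound. Since the section goes to some lengths to adopt precisely this strict interpretation of $\Theta$, the verification reduces to a careful reading of the relevant inequalities, after which the inductive choice $\alpha_k \leq D_k\,\alpha_{k-1}/C_k$ yields the lemma for all $m \geq k$ and $n \geq 1$.
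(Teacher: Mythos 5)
Your proposal is correct and takes essentially the same route as the paper: the same parameterization by the position of the first $1$ in the $B(m,k)$ factor and of the $1$ in the $B(n,1)$ factor, the same tail estimate $\sum_{i\geq 1}\ln^k(m'+i)/(m'+i)^k = O\bigl(\ln^k(m')/{m'}^{k-1}\bigr)$, and the same key observation that the term $\sum_{s\in B(n,1)}g(s)$ can be discarded because the $B(m,k)$ term alone absorbs the left-hand side once $\alpha_k$ is chosen small relative to $\alpha_{k-1}$. The only (immaterial) difference is that the paper verifies the inequality slice-by-slice over the position of the first $1$, comparing $O(\ln^k(i))$ against $\Theta(\ln^k(i))$ for each $i$, whereas you sum over that index first and compare the aggregates $C_k\alpha_k m\ln^k(m)$ and $D_k\alpha_{k-1}m\ln^k(m)$.
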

    \begin{proof}
    We know the value of $\alpha_1$ from before. Fix a $k\geq 2$, and suppose that $\alpha_1,\ldots,\alpha_{k-1}$ are properly chosen. 
    First note that
    \begin{align*}
    \sum_{s\in B(m,k)} g(s) &= \sum_{i=k}^m\sum_{s\in B(i-1,k-1)} g(0^{m-i}1s).
    \end{align*}
    Also note that
    \begin{align*}
    \sum_{s\in B(m,k)B(n,1)} g(s) &= \sum_{i=k}^m\sum_{s\in B(i-1,k-1)}\sum_{t\in B(n,1)} g(0^{m-i}1st).
    \end{align*}
    Consequently we are done if we show that 
    \begin{align*}
    \sum_{s\in B(i-1,k-1)}\sum_{t\in B(n,1)} g(0^{m-i}1st) \leq \sum_{s\in B(i-1,k-1)} g(0^{m-i}1s),
    \end{align*}
    for all $i$ for which $k\leq i\leq m$. For any such $i$, the right hand side is the following
    \begin{align}
    \sum_{s\in B(i-1,k-1)} g(0^{m-i}1s) &= \sum_{j=k}^{i}\sum_{s\in B(j-2,k-2)} g(0^{m-i}1s10^{i-j})\nonumber\\
    &= \sum_{j=k}^{i}\sum_{s\in B(j-2,k-2)} \alpha_{k-1}\frac{\ln^{k-1}(j)}{j^{k-1}}\nonumber\\
    &= \sum_{j=k}^{i}\alpha_{k-1}\binom{j-2}{k-2}\frac{\ln^{k-1}(j)}{j^{k-1}}\nonumber\\
    &= \sum_{j=k}^{i}\Theta\Big(\frac{\ln^{k-1}(j)}{j}\Big)\nonumber\\
    &= \Theta\Big(\ln^{k}(i)\Big),\label{eq:complete expressions preliminary 1}
    \end{align}
    where the implied constant depends on $\alpha_{k-1}$.
    
    Fix an $i$ such that $k\leq i\leq m$, and fix a string $s\in B(i-1,k-1)$. It holds that
    \begin{align}
    \sum_{t\in B(n,1)} g(0^{m-i}1st) &= \sum_{j=1}^{n} \alpha_{k}\frac{\ln^k(i+j)}{(i+j)^k}\label{eq:complete expressions preliminary 2}\\
    &\leq \sum_{j=1}^{\infty} \alpha_{k}\frac{\ln^k(i+j)}{(i+j)^k}\nonumber\\
    &=O\Big(\frac{\ln^{k}(i)}{i^{k-1}}\Big),\nonumber 
    \end{align}
    where the implied constant depends on $\alpha_k$. Since this holds for all strings $s\in B(i-1,k-1)$ we can write
    \begin{align}
    \sum_{s\in B(i-1,k-1)}\sum_{t\in B(n,1)} g(0^{m-i}1st)
    &= \binom{i-1}{k-1}O\Big(\frac{\ln^{k}(i)}{i^{k-1}}\Big)\nonumber\\
    &= O(\ln^{k}(i)). \label{eq:complete expressions preliminary 3}
    \end{align}
    
    Comparing (\ref{eq:complete expressions preliminary 1})-(\ref{eq:complete expressions preliminary 3}) we can choose $\alpha_k$ so that the following holds
    \begin{align*}
    \sum_{s\in B(i-1,k-1)}\sum_{t\in B(n,1)} g(0^{m-i}1st) \leq \sum_{s\in B(i-1,k-1)} g(0^{m-i}1s),
    \end{align*}
    and this completes the proof.
    \end{proof}
    
    \begin{corollary}\label{lemma:complete expressions} The function $g$ is feasible in $\mathcal{P}'(n,3)$.
    \end{corollary}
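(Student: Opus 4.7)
The plan is to verify all constraints of $\mathcal{P}'(n,3)$: the boundedness $0 \leq g(s) \leq |s|$ and the concatenation inequalities indexed by $(n_1,k_1,n_2,k_2) \in \mathcal{C}_c(n,3)$. Boundedness is straightforward by inspection---for $|s|_1 = 0$ equality holds, for $|s|_1 = 1$ we use $\ln(e|s|) \leq |s|$, and for $|s|_1 \in \{2,3\}$ the values $\alpha_{k-1} \ln^{k-1}(p)/p^{k-1}$ are uniformly small and can be made as small as desired by shrinking $\alpha_1, \alpha_2$, which does not break Lemmas \ref{lemma:the first complete sign} and \ref{lemma:complete expressions preliminary} since those inequalities only become easier as the constants decrease.

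For the concatenation inequalities, I would perform a case analysis on $(k_1, k_2)$, which by $k_1 + k_2 \leq 3$ takes ten values. The pair $(0,0)$ is a trivial equality $n_1 + n_2 = n_1 + n_2$. The pairs $(0,1)$ and $(1,0)$ are exactly the inequalities verified in Example \ref{ex:binomial}. The pair $(1,1)$ is Lemma \ref{lemma:the first complete sign}. The pair $(2,1)$ is Lemma \ref{lemma:complete expressions preliminary} applied with $k = 2$.

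The remaining pairs $(0,2), (2,0), (0,3), (3,0)$, and $(1,2)$ I would dispatch using a single structural observation together with reversal symmetry. The key observation is that $g(0^n t) = g(t)$ whenever $|t|_1 \geq 2$, because the span $p$ between leftmost and rightmost $1$ is unchanged by padding with zeros. Hence for the $(0,3)$ case, $\sum_{t \in B(n_2, 3)} g(0^{n_1} t) = \sum_{t \in B(n_2, 3)} g(t)$, so the inequality reduces to $0 \leq g(0^{n_1}) = n_1$, which is trivial; the cases $(0,2), (2,0), (3,0)$ are analogous. For $(1,2)$ I would use that $g$ is invariant under reversal (again because $p$ is) and that $\mathrm{rev}(B(m,1) B(n,2)) = B(n,2) B(m,1)$, which reduces the inequality to the already-handled $(2,1)$ case.

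The proof is essentially a case verification, with all the analytic work done in the prior lemmas, so there is no genuine new obstacle. The one spot where slight care is needed is to confirm that the constant $\alpha_2$ produced by Lemma \ref{lemma:complete expressions preliminary} for $k=2$ is simultaneously compatible with the boundedness constraint; since the lemma leaves $\alpha_2$ free to be further reduced and boundedness is extremely loose on the relevant ranges, this reconciliation is immediate.
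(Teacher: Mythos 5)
Your proposal is correct and matches the proof the paper leaves implicit: the only nontrivial constraints of $\mathcal{P}'(n,3)$ are the $(1,1)$ and $(2,1)/(1,2)$ cases, which are exactly Lemmas \ref{lemma:the first complete sign} and \ref{lemma:complete expressions preliminary} with $k=2$ (the latter transferred to $(1,2)$ by reversal invariance of $g$), while every case with some $k_i=0$ collapses because padding with zeros does not change $g$. The one slight imprecision is the blanket claim that shrinking constants never hurts --- decreasing $\alpha_1$ shrinks the \emph{right-hand} side of the $(2,1)$ inequality --- so one must fix $\alpha_1$ (for boundedness and Lemma \ref{lemma:the first complete sign}) before invoking Lemma \ref{lemma:complete expressions preliminary} to choose $\alpha_2$, which is exactly the order the lemmas already impose, so nothing breaks.
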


    It should be a straightforward but tedious task to use the notions we built in this section to solve the following problem.
    \begin{problem}\label{problem:complete} It should be possible to extend Lemma \ref{lemma:complete expressions preliminary} so that for constants  $k,l\geq 1$ it holds that
    \begin{align*}
    \sum_{s\in B(m,k)B(n,l)} g(s) \leq \sum_{s\in B(m,k)} g(s) + \sum_{s\in B(n,l)} g(s),
    \end{align*}
    for all $n\geq k, m\geq l$.
    \end{problem}
    
    A solution to Problem \ref{problem:complete} proves that $g$ is feasible in $\mathcal{P}'(n,k)$ for all $k \geq 0$.

\section{Hardness of solving the linear programming formulations}\label{section:complexity}
    It is true that efficient algorithms exist for solving linear programs. However we know that for some languages $L\subseteq \Sigma^n$ the size of the linear program  $\mathcal{P}(L)$, i.e., the number of constraints and variables, can be exponential in $n$. Therefore given a language $L$ it might not even be tractable to write down a specification of $\mathcal{P}(L)$ let alone solving it.\footnote{An efficient algorithm for a linear program may exist even in cases where an explicit specification of the linear program cannot be obtained efficiently. For example a linear program $\mathcal{P}$ can be solved in polynomial time if there exists a separation oracle for $\mathcal{P}$. Nevertheless we show that there should not exist such an efficient algorithm (or a separation oracle) for the linear programming formulations $\mathcal{P}(L)$ and $\mathcal{P}'(L)$ in this note (under sensible complexity theoretic assumptions).} This is compatible with what we know about the computational hardness of finding optimal regular expressions and the associated problems in \cite{meyer-the-equivalence-problem-for-regular,jiang-minimal-nfa-problems-are-hard,gruber-computational-complexity-of-nfa-minimization-for-finite-and-unary-languages,gramlich-minimizing-nfas-and-regular-expressions}.

    Given a regular expression (or a DFA or an NFA) finding a regular expression of minimum length expressing the same language is PSPACE-complete as shown in Meyer and Stockmeyer \cite{meyer-the-equivalence-problem-for-regular} and in Jiang and Ravikumar \cite{jiang-minimal-nfa-problems-are-hard}. The decision version of this problem is PSPACE-complete as well, as seen in \cite{jiang-minimal-nfa-problems-are-hard}. In the decision setting a regular expression and an integer $l$ are given, and the question is whether there exists a regular expression of length $\leq l$ expressing the same language.\footnote{This is in fact explicitly mentioned only for NFAs in \cite{jiang-minimal-nfa-problems-are-hard}.} This problem is known to remain hard even when restricting to finite languages, as seen from Gruber and Holzer \cite{gruber-computational-complexity-of-nfa-minimization-for-finite-and-unary-languages}. Clearly for a finite language $L$ one can answer the decision problem if one finds $\opt(\mathcal{P}_S(L))$. Therefore solving $\mathcal{P}_S(L)$ is as hard as the aforementioned decision problem.

\section{Future directions}\label{section:future}
    We mentioned a number of open problems in this note. The most important among these open problems is whether $\opt(\mathcal{P}'(n,k))=|R_{n,k}|=\opt(\mathcal{D}'(n,k))$. 
    
    It is interesting to know whether the method of this note could be applied to other interesting languages. One interesting language is the parity language $\{s\in \{0,1\}^n:|s|_1\text{ is even}\}$. Optimal lower bounds are known for the parity language from Ellul et al. \cite{ellul-regular-expressions} and Gruber and Johannsen \cite{Gruber2008} using techniques from circuit complexity. 
    
    It is also interesting to know whether the linear programming method can be used to study measures of descriptional complexity associated with automata.
    
    For an infinite language $L$ the closures $\mathcal{C}_0(L)$ and $\mathcal{C}(L)$ are infinite sets and they should be extended to consider the Kleene star as well. The extended optimization problem $\mathcal{P}(L)$ will be 
    \begin{align*}
        \mathcal{P}(L)\quad\text{ maximize:}\quad &\sum_{s\in L}x_s\\
        \text{subject to:}\quad & \sum_{s\in K_1K_2}x_s \leq \sum_{s\in K_1}x_s + \sum_{s\in K_2}x_s  \quad\text{ for all } (K_1,K_2)\in \mathcal{C}_c(L),\\
        &\sum_{s\in K^*} x_s \leq \sum_{s\in K} x_s\quad\text{ for all } K\in \mathcal{C}(L),\\
        &0 \leq x_s \leq |s|\quad\text{ for all } s \in \mathcal{C}_0(L).
    \end{align*}
    which contains an infinite number of variables and constraints and the sums can contain infinite number of terms. We do not know if there are infinite languages for which this formulation can be used to prove nontrivial lower bounds. 
    
    We finish with two additional open problems.
        
    \begin{problem} We mentioned a procedure for converting regular expressions to integer feasible solutions of $\mathcal{P}(L)$ and $\mathcal{P}_S(L)$? Does there exist a converse procedure? Could we come up with an algorithm that converts integer feasible solutions $(\overline{W},\overline{Y},\overline{Z})$ of $\mathcal{D}_S(L)$ to regular expressions $R$ of $L$ such that $|R|$ equals the objective value of $(\overline{W},\overline{Y},\overline{Z})$? 
    \end{problem}
        
    \begin{problem} Does there exist a constant $\alpha \geq 1$ and a (rounding) algorithm that converts (not necessarily integral) feasible solutions $(\overline{Y},\overline{Z},\overline{W})$ of $\mathcal{D}_S(L)$ to regular expressions $R$ of $L$ such that $|R|$ is $\alpha$ times the objective value of $(\overline{Y},\overline{Z},\overline{W})$?
    \end{problem}
    
    If such a procedure exists it can be used to obtain approximately optimal regular expressions. Any such procedure however must be inefficient due to the results on the hardness of approximating optimal regular expressions in Gramlich and Schnitger \cite{gramlich-minimizing-nfas-and-regular-expressions}.
        
\medskip
 
\bibliographystyle{unsrt}
\bibliography{ref}

\end{document}